%
%
%
%
%
%
%
%
\documentclass{article}
%
\usepackage{amsfonts,amsmath,amssymb}
\usepackage{dsfont}
\usepackage{graphicx}
\usepackage{multirow}
\usepackage{stmaryrd}
\usepackage{pstricks}
\usepackage[standard]{ntheorem}
%
\begin{document}

\title{Protein Folding in the 2D Hydrophobic-Hydrophilic (HP) Square 
Lattice Model is Chaotic\thanks{Authors in alphabetical order}
}

\author{Jacques M. Bahi, Nathalie C\^ot\'e, Christophe Guyeux, and Michel Salomon}




\newcommand{\CG}[1]{\begin{color}{red}\textit{#1}\end{color}}

\maketitle

\begin{abstract}

Among the unsolved problems  in computational biology, protein folding
is one  of the  most interesting challenges.   To study  this folding,
tools like neural networks and  genetic algorithms have received a lot
of  attention,  mainly  due  to  the NP-completeness  of  the  folding
process.  The background idea that has  given rise to the use of these
algorithms  is  obviously that  the  folding  process is  predictable.
However, this important assumption is disputable as chaotic properties
of  such a  process have  been recently  highlighted.  In  this paper,
which  is  an  extension  of  a  former  work  accepted  to  the  2011
International  Joint  Conference  on  Neural Networks  (IJCNN11),  the
topological behavior of a well-known dynamical system used for protein
folding  prediction is  evaluated.  It  is  mathematically established
that  the  folding dynamics  in  the  2D hydrophobic-hydrophilic  (HP)
square lattice model, simply called ``the 2D model'' in this document,
is  indeed   a  chaotic  dynamical  system  as   defined  by  Devaney.
Furthermore, the  chaotic behavior of this model  is qualitatively and
quantitatively deepened, by  studying other mathematical properties of
disorder,   namely:   the   indecomposability,   instability,   strong
transitivity,  and  constants of  expansivity  and sensitivity.   Some
consequences  for both biological  paradigms and  structure prediction
using this model are then  discussed.  In particular, it is shown that
some neural  networks seems to be  unable to predict  the evolution of
this model with accuracy, due to its complex behavior.

\end{abstract}

\section{Introduction}
\label{intro}

Proteins, polymers formed  by different kinds of amino  acids, fold to
form a specific tridimensional  shape.  This geometric pattern defines
the  majority of  functionality within  an organism,  \emph{i.e.}, the
macroscopic  properties, function,  and behavior  of a  given protein.
For  instance, the hemoglobin  is able  to carry  oxygen to  the blood
stream  due to  its 3D  geometric pattern.   However, contrary  to the
mapping from DNA  to the amino acids sequence,  the complex folding of
this last  sequence still remains not  well-understood.  Moreover, the
determination  of 3D  protein  structure from  the  amino acid  linear
sequence,  that is  to say,  the  exact computational  search for  the
optimal conformation  of a molecule, is completely  unfeasible.  It is
due  to  the  astronomically  large  number  of  possible  3D  protein
structures  for  a  corresponding  primary  sequence  of  amino  acids
\cite{Hoque09}: the computation  capability required even for handling
a   moderately-sized  folding   transition  exceeds   drastically  the
computational  capacity around  the world.   Additionally,  the forces
involved in  the stability of  the protein conformation  are currently
not modeled  with \linebreak  enough accuracy \cite{Hoque09},  and one
can even wonder if one day a fully accurate model can be found.

Then  it is impossible  to compute  exactly the  3D structures  of the
proteins.  Indeed,  the Protein Structure Prediction  (PSP) problem is
NP-complete \cite{Crescenzi98}.   This is why the  3D conformations of
proteins are \emph{predicted}: the  most stable energy-free states are
looked  for by  using  computational intelligence  tools like  genetic
algorithms      \cite{DBLP:conf/cec/HiggsSHS10},      ant     colonies
\cite{Shmygelska2005Feb},                particle                swarm
\cite{DBLP:conf/cec/Perez-HernandezRG10},      memetic      algorithms
\cite{Islam:2009:NMA:1695134.1695181},      or     neural     networks
\cite{Dubchak1995}.   This  search   is  justified  by  the  Afinsen's
``Thermodynamic  Hypothesis'',   claiming  that  a   protein's  native
structure     is    at    its     lowest    free     energy    minimum
\cite{Anfinsen20071973}.  The use  of computational intelligence tools
coupled  with  proteins   energy  approximation  models  (like  AMBER,
DISCOVER,  or ECEPP/3),  come from  the  fact that  finding the  exact
minimum energy of a 3D structure of a protein is a very time consuming
task.   Furthermore, in  order to  tackle  the complexity  of the  PSP
problem, authors that  try to predict the protein  folding process use
models of various resolutions.  In low resolution models, atoms in the
same amino  acid can  for instance be  considered as the  same entity.
These low  resolution models  are used  as the first  stage of  the 3D
structure  prediction:   the  backbone  of  the   3D  conformation  is
determined.   Then,  high  resolution  models come  next  for  further
exploration.   Such a  prediction  strategy is  commonly  used in  PSP
softwares   like   ROSETTA   \cite{Bonneau01,Chivian2005}  or   TASSER
\cite{Zhang2005}.

In  this   paper,  which  is  an  extension   of  \cite{bgc11:ip},  we
mathematically demonstrate that a particular dynamical system, used in
low  resolutions models  to predict  the backbone  of the  protein, is
chaotic  according to  the  Devaney's formulation.   Chaos in  protein
folding  has  been  already  investigated  in  the  past  years.   For
instance, in  \cite{Bohm1991375}, the  Lyapunov exponent of  a folding
process has been experimentally computed, to show that protein folding
is highly  complex.  More precisely,  the author has  established that
the  crambin protein  folding process,  which  is a  small plant  seed
protein constituted  by 46~amino acids  from \emph{Crambe Abyssinica},
has a  positive Lyapunov exponent.   In \cite{Zhou96}, an  analysis of
molecular dynamics simulation of a model $\alpha$-helix indicates that
the motion  of the helix  system is chaotic, \emph{i.e.},  has nonzero
Lyapunov exponents, broad-band  power spectra, and strange attractors.
Finally,  in  \cite{Braxenthaler97},   the  authors  investigated  the
response of a  protein fragment in an explicit  solvent environment to
very small  perturbations of the  atomic positions, showing  that very
tiny  changes in  initial conditions  are amplified  exponentially and
lead  to vastly different,  inherently unpredictable  behavior.  These
papers have studied experimentally the dynamics of protein folding and
stated  that  this process  exhibits  some  chaotic properties,  where
``chaos'' refers to various physical understandings of the phenomenon.
They noted  the complexity of  the process in concrete  cases, without
offering  a  study framework  making  it  possible  to understand  the
origins of such a behavior.

The approach presented in this  research work is different for the two
following reasons.  First, we  focus on mathematical aspects of chaos,
like the  Devaney's formulation of  a chaotic dynamical  system.  This
well-known topological  notion for  a chaotic behavior  is one  of the
most  established  mathematical  definition  of  unpredictability  for
dynamical  systems.  Second, we  do not  study the  biological folding
process, but the protein folding process  as it is described in the 2D
hydrophobic-hydrophilic (HP) lattice  model \cite{Berger98}.  In other
words,  we mathematically  study  the folding  dynamics  used in  this
model,  and  we   wonder  if  this  model  is   stable  through  small
perturbations.  For instance, what are  the effects in the 2D model of
changing a  residue from hydrophobic to hydrophilic?   Or what happens
if we do not realize exactly the good rotation on the good residue, at
one given stage of the 2D  folding process, due to small errors in the
knowledge of the protein?

Let us recall  that the 2D HP square lattice model  is a popular model
with low resolution that  focuses only on hydrophobicity by separating
the amino  acids into  two sets: hydrophobic  (H) and  hydrophilic (or
polar P) \cite{Dill1985}.  This model  has been used several times for
protein                       folding                       prediction
\cite{DBLP:conf/cec/HiggsSHS10,Braxenthaler97,DBLP:conf/cec/IslamC10,Unger93,DBLP:conf/cec/HorvathC10}.
In  what   follows,  we  show   that  \emph{the  folding   process  is
  unpredictable (chaotic) in  the 2D HP square lattice  model used for
  prediction},  and  we investigate  the  consequences  of this  fact.
Chaos here  refers to our  inability to make relevant  prediction with
this  model,   which  does  not  \emph{necessarily}   imply  that  the
biological folding dynamics is chaotic, too.  In particular, we do not
claim  that  these biological  systems  must  try  a large  number  of
conformations in order  to find the best one.   Indeed, the prediction
model is proven to be chaotic, but this fact is not clearly related to
the  impact  of  environmental  factors  on  true  biological  protein
folding.

\bigskip

After having established by two different proofs the chaos, as defined
in the Devaney's  formulation, of the dynamical system  used in the 2D
model, we will deepen the evaluation of the disorder generated by this
system for backbone prediction.  A qualitative topological study shows
that  its  folding  dynamics  is  both  indecomposable  and  unstable.
Moreover,   the   unpredictability   of   the  system   is   evaluated
quantitatively too,  by computing the  constant of sensibility  to the
initial conditions and the constant of expansivity.
All of these results show  that the dynamical system used for backbone
prediction in the 2D model has  a very intense chaotic behavior and is
highly unpredictable.

Consequences  of these  theoretical results  are then  outlined.  More
precisely,  we will  focus on  the following  questions.   First, some
artificial intelligence tools used  for protein folding prediction are
then based,  for the backbone  evaluation, on a dynamical  system that
presents  several  chaotic properties.   It  is  reasonable to  wonder
whether these  properties impact the quality of  the prediction.  More
specifically, we  will study  if neural networks  are able to  learn a
topological chaotic  behavior, and if predictions  resulting from this
learning   are  close   to   the  reality.    Moreover,  the   initial
conformation,  encompassing   the  sequence  of   amino  acids,  their
interactions, and  the effects of  the outside world, are  never known
with infinite precision.  Taking into  account the fact that the model
used for prediction  embeds a dynamical system being  sensitive to its
initial condition, what can we  conclude about the confidence put into
the final  3D conformation?  Concerning the biological  aspects of the
folding  process, the  following facts  can be  remarked.  On  the one
hand, a  chaotic behavior seems to be  incompatible with approximately
one thousand  general categories  of folds: this  final kind  of order
seems  in  contradiction  with  chaos.  Additionally,  sensibility  to
initial  conditions seems  to be  contradictory with  the fact  that a
sequence  of  amino  acids  always  folds in  the  same  conformation,
whatever the environment  dependency.  So, as the 2D  HP lattice model
for backbone  prediction is chaotic whereas the  whole folding process
seems not, one can wonder  whether this backbone prediction is founded
or not.   On the other  hand, recent experimental  researches recalled
previously tend to  prove that the folding process  presents, at least
to  a  certain extent,  some  characteristics  of  a chaotic  behavior
\cite{Bohm1991375,Zhou96,Braxenthaler97}.  If this theory is confirmed
and  biological  explanations  are  found  (for  instance,  regulatory
processes  could  repair  or  delete misfolded  proteins),  then  this
research work could appear as a first step in the theoretical study of
the chaos of protein folding.

In fact,  the contradiction  raised above is  only apparent, as  it is
wrong to claim that all of the sequences of amino acids always fold in
a  constant and  well-defined  conformation. More  precisely, a  large
number of proteins,  called ``intrinsically unstructured proteins'' or
``intrinsically disordered  proteins'', lay  at least in  part outside
this  rule. More  than 600~proteins  are proven  to be  of  this kind:
antibodies,  p21 and  p27  proteins, fibrinogen,  casein in  mammalian
milk, capsid  of the Tobacco mosaic  virus, proteins of  the capsid of
bacteriophages, to name a few. Indeed, a large number of proteins have
at  least  a  disordered  region  of  greater  or  lesser  size.  This
flexibility allow them to exert  various functions into an organism or
to bind to  various macromolecules. For instance, the  p27 protein can
be binded to various kind  of enzymes.  Furthermore, some studies have
shown that  between 30\%  and 50\% of  the eukaryote proteins  have at
least          one          large         unstructured          region
\cite{Dyson2005,doi:10.1146/annurev.biophys.37.032807.125924}.  Hence,
regular  and disordered  proteins can  be linked  to  the mathematical
notions of chaos  as understood by Devaney, or  Knudsen, which consist
in the  interlocking of points  having a regular behavior  with points
whose desire is to visit the whole space.

The remainder  of this  paper is structured  as follows.  In  the next
section we recall some notations and terminologies on the 2D model and
the  Devaney's  definition of  chaos.   In Section  \ref{sec:dynamical
  system},  the  folding process  in  the 2D  model  is  written as  a
dynamical   system   on   a   relevant   metrical   space.    Compared
to~\cite{bgc11:ip},  we  have  simplified  the  folding  function  and
refined the metrical space to the set of all acceptable conformations.
This work,  which is  the first contribution  of this paper,  has been
realized  by   giving  a  complete  understanding   of  the  so-called
Self-Avoiding Walk  (SAW) requirement.  In Sections~\ref{sec:HP=chaos}
and \ref{sec:CI=chaos}, proofs of  the chaotic behavior of a dynamical
system  used for backbone  prediction, are  taken from~\cite{bgc11:ip}
and adapted to this  set of acceptable conformations.  This adaptation
is the second contribution of  this research work.  The first proof is
directly achieved in  Devaney's context whereas the second  one uses a
previously    proven     result    concerning    chaotic    iterations
\cite{guyeux09}.  The following section  is devoted to qualitative and
quantitative  evaluations of  the  disorder exhibited  by the  folding
process.  This is the third theoretical contribution of this extension
of \cite{bgc11:ip}.   Consequences of this  unpredictable behavior are
given in  Section \ref{Sec:Consequences}.   Among other things,  it is
regarded  whether  chaotic  behaviors   are  harder  to  predict  than
``normal'' behaviors or not, and  if such behaviors are easy to learn.
This   section  extends   greatly  the   premises   outlined  formerly
in~\cite{bgc11:ip}.   Additionally, reasons  explaining why  a chaotic
behavior unexpectedly  leads to approximately  one thousand categories
of folds  are proposed.  This paper  ends by a  conclusion section, in
which  our contribution  is  summarized and  intended  future work  is
presented.

\section{Basic Concepts}
\label{Sec:basic recalls}

In the sequel $S^{n}$ denotes the $n^{th}$ term of a sequence $S$ and
$V_{i}$ the $i^{th}$ component of a vector $V$.
The $k^{th}$
composition of a single function $f$ is represented by $f^{k}=f
\circ...\circ f$.
The set of congruence classes modulo 4 is denoted by $\mathds{Z}/4\mathds{Z}$.
 Finally, given two integers $a<b$, the following notation is used:
$\llbracket a;b\rrbracket =\{a,a+1,\hdots,b\}$.

\subsection{2D Hydrophilic-Hydrophobic (HP) Model}

\subsubsection*{HP Model}

In  the HP model,  hydrophobic interactions  are supposed  to dominate
protein  folding.  This  model was  formerly introduced  by  Dill, who
considers in  \cite{Dill1985} that the protein core  freeing up energy
is formed by hydrophobic  amino acids, whereas hydrophilic amino acids
tend  to move  in the  outer surface  due to  their affinity  with the
solvent (see Fig.~\ref{fig:hpmodel}).
 
In  this model,  a protein  conformation is  a  ``self-avoi\-ding walk
(SAW)'' on a  2D or 3D lattice such that its  energy $E$, depending on
topological neighboring contacts  between hydrophobic amino acids that
are not  contiguous in  the primary structure,  is minimal.   In other
words, for an  amino-acid sequence $P$ of length  $\mathsf{N}$ and for
the set $\mathcal{C}(P)$  of all SAW conformations of  $P$, the chosen
conformation  will   be  $C^*  =   argmin  \left\{E(C)  \big/   C  \in
\mathcal{C}(P)\right\}$ \cite{Shmygelska05}.  In  that context and for
a conformation  $C$, \linebreak  $E(C)=-q$ where $q$  is equal  to the
number of  topological hydrophobic neighbors.   For example, $E(c)=-5$
in Fig.~\ref{fig:hpmodel}.

\begin{figure}[t]
\centering
\includegraphics[width=2.35in]{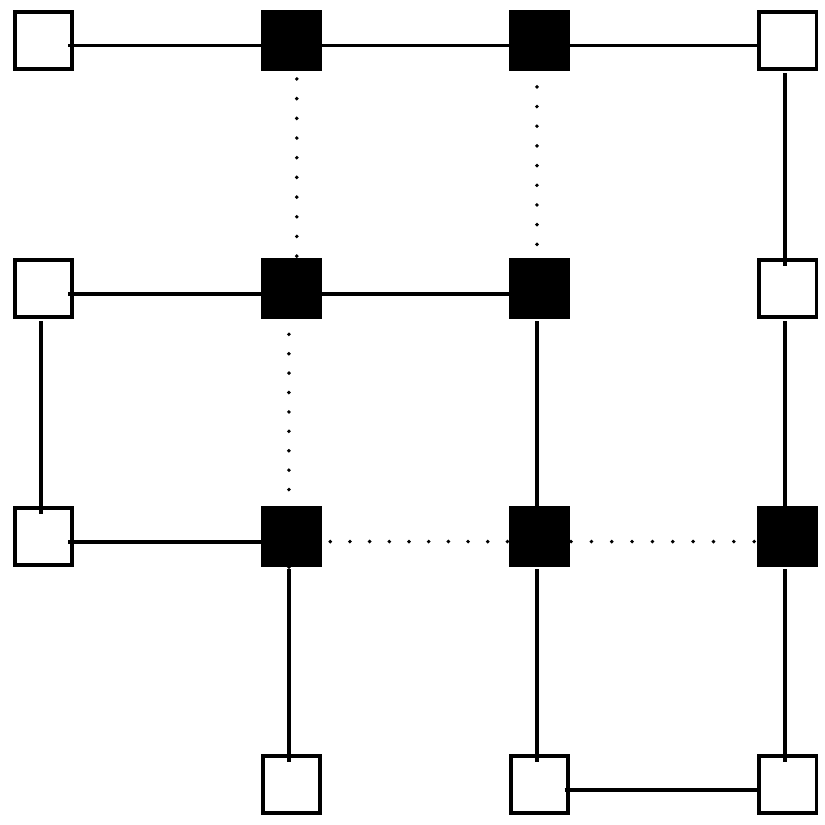}
\caption{Hydrophilic-hydrophobic model (black squares are
hydrophobic residues)}
\label{fig:hpmodel}
\end{figure}

%
%

\subsubsection*{Protein Encoding}

Additionally to the direct coordinate presentation, at least two other
isomorphic encoding  strategies for  HP models are  possible: relative
encoding and absolute  encoding.  In relative encoding \cite{Hoque09},
the  move  direction is  defined  relative  to  the direction  of  the
previous     move.      Alternatively,     in    absolute     encoding
\cite{Backofen99algorithmicapproach}, which is  the encoding chosen in
this paper, the direct  coordinate presentation is replaced by letters
or  numbers  representing  directions  with  respect  to  the  lattice
structure.

For absolute  encoding in the  2D square lattice, the  permitted moves
are:  forward $\rightarrow$  (denoted  by 0),  down $\downarrow$  (1),
backward $\leftarrow$  (2), and up $\uparrow$ (3).   A 2D conformation
$C$ of  $\mathsf{N}+1$ residues for a  protein $P$ is  then an element
$C$ of  $\mathds{Z}/4\mathds{Z}^{\mathsf{N}}$, with a  first component
equal   to    0   (forward)   \cite{Hoque09}.     For   instance,   in
Fig.~\ref{fig:hpmodel},  the 2D  absolute  encoding is  00011123322101
(starting from  the upper  left corner).  In  that situation,  at most
$4^{\mathsf{N}}$   conformations   are   possible   when   considering
$\mathsf{N}+1$ residues, even  if some of them are  invalid due to the
SAW requirement.

\subsection{Devaney's Chaotic Dynamical Systems}
\label{subsection:Devaney}

From  a  mathematical point  of  view,  deterministic  chaos has  been
thoroughly studied  these last decades, with  different research works
that  have   provide  various  definitions  of   chaos.   Among  these
definitions, the  one given  by Devaney~\cite{Devaney} is  perhaps the
most well established.

Consider  a topological  space $(\mathcal{X},\tau)$  and  a continuous
function $f$ on  $\mathcal{X}$.  Topological transitivity occurs when,
for  any point, any  neighborhood of  its future  evolution eventually
overlap with any other given region. More precisely,

\begin{definition}
 $f$ is said to be \emph{topologically transitive} if, for any pair
 of open sets $U,V \subset \mathcal{X}$, there exists $k>0$ such that
 $f^k(U) \cap V \neq \emptyset$.
\end{definition}

This property  implies that a  dynamical system cannot be  broken into
simpler  subsystems.  It  is intrinsically  complicated and  cannot be
simplified.  Besides, a dense set  of periodic points is an element of
regularity that a chaotic dynamical system has to exhibit.

\begin{definition}
 An element (a point) $x$ is a \emph{periodic element} (point) for
 $f$ of period $n\in \mathds{N}^*,$ if $f^{n}(x)=x$.
\end{definition}

\begin{definition}
 $f$ is said to be \emph{regular} on $(\mathcal{X}, \tau)$ if the set
 of periodic points for $f$ is dense in $\mathcal{X}$: for any point
 $x$ in $\mathcal{X}$, any neighborhood of $x$ contains at least one
 periodic point.
\end{definition}

This  regularity ``counteracts'' the  effects of  transitivity.  Thus,
due to these two properties, two points close to each other can behave
in a completely different  manner, leading to unpredictability for the
whole system.  Then,

\begin{definition}[Devaney's chaos]
 $f$ is said to be \emph{chao\-tic} on $(\mathcal{X},\tau)$ if $f$ is
 regular and topologically transitive.
\end{definition}

The chaos property is related to the notion of ``sensitivity'',
defined on a metric space $(\mathcal{X},d)$ by:

\begin{definition} \label{sensitivity} 
 $f$ has \emph{sensitive dependence on initial conditions} if there
 exists $\delta >0$ such that, for any $x\in \mathcal{X}$ and any
 neighborhood $V$ of $x$, there exist $y\in V$ and $n \geq 0$ such
 that $d\left(f^{n}(x), f^{n}(y)\right) >\delta $. 
 
 $\delta$ is called the \emph{constant of sensitivity} of $f$.
\end{definition}

Indeed, Banks  \emph{et al.}  have proven in~\cite{Banks92}  that when
$f$ is chaotic and $(\mathcal{X}, d)$  is a metric space, then $f$ has
the  property  of sensitive  dependence  on  initial conditions  (this
property was formerly an element  of the definition of chaos).  To sum
up, quoting Devaney in~\cite{Devaney}, a chaotic dynamical system ``is
unpredictable   because  of  the   sensitive  dependence   on  initial
conditions.   It  cannot  be   broken  down  or  simplified  into  two
subsystems which do not  interact because of topological transitivity.
And  in the midst  of this  random behavior,  we nevertheless  have an
element  of  regularity''.    Fundamentally  different  behaviors  are
consequently possible and occur in an unpredictable way.

\section{A Dynamical System for the 2D HP Square Lattice Model}
\label{sec:dynamical system}

The objective of  this research work is to  establish that the protein
folding process,  as it is  described in the  2D model, has  a chaotic
behavior.   To  do so,  this  process must  be  first  described as  a
dynamical system.

\subsection{Initial Premises}

Let us start with preliminaries introducing some concepts that will be
useful in our approach.

The  primary structure  of  a given  protein  $P$ with  $\mathsf{N}+1$
residues is coded  by $0 0 \hdots 0$  ($\mathsf{N}$ times) in absolute
encoding.  Its final 2D conformation has an absolute encoding equal to
$0  C_1^*  \hdots C_{\mathsf{N}-1}^*$,  where  $\forall  i, C_i^*  \in
\mathds{Z}/4\mathds{Z}$,  is such  that $E(C^*)  =  argmin \left\{E(C)
\big/ C \in  \mathcal{C}(P)\right\}$.  This final conformation depends
on the repartition  of hydrophilic and hydrophobic amino  acids in the
initial sequence.

Moreover, we suppose that, if  the residue number $n+1$ is forward the
residue number $n$ in absolute  encoding ($\rightarrow$) and if a fold
occurs after  $n$, then the forward  move can only by  changed into up
($\uparrow$) or  down ($\downarrow$).   That means, in  our simplistic
model,  only  rotations of  $+\frac{\pi}{2}$  or $-\frac{\pi}{2}$  are
possible.

Consequently, for a given residue that is supposed to be updated, only
one of  the two possibilities below  can appear for  its absolute move
during a fold:
\begin{itemize}
\item $0 \longmapsto 1, 1 \longmapsto 2, 2 \longmapsto 3,$ or $ 3 \longmapsto 0$ 
for a fold in the clockwise direction, or
\item $1 \longmapsto 0, 2 \longmapsto 1, 3 \longmapsto 2,$ or $0 \longmapsto 3$ 
for an anticlockwise. 
\end{itemize}
This fact leads to the following definition:
\begin{definition}
The   \emph{clockwise    fold   function}   is    the   function   $f:
\mathds{Z}/4\mathds{Z} \longrightarrow \mathds{Z}/4\mathds{Z}$ defined
by $f(x)=x+1 (\textrm{mod}~ 4)$.
\end{definition}
Obviously   the  dual  anticlockwise   fold  function   is  \linebreak
$f^{-1}(x)=x-1 (\textrm{mod}~ 4)$.

Thus at  the $n^{th}$ folding  time, a residue  $k$ is chosen  and its
absolute  move is  changed  by using  either  $f$ or  $f^{-1}$.  As  a
consequence,  all of  the  absolute  moves must  be  updated from  the
coordinate  $k$ until  the last  one  $\mathsf{N}$ by  using the  same
folding function.

\begin{example}
\label{ex1}
If the current conformation is $C=000111$, i.e.,

\begin{figure}[h]
\centering
\includegraphics[width=1.25in]{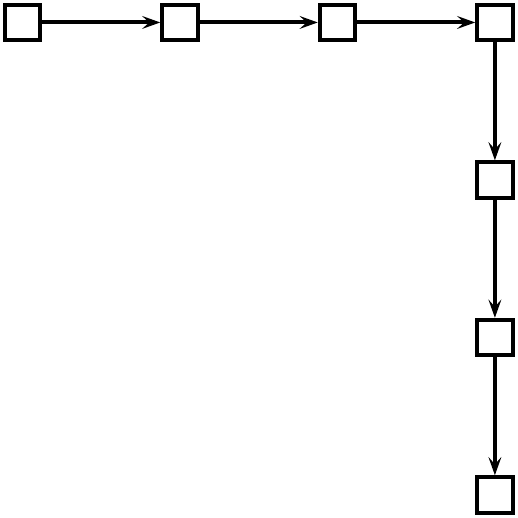}
\end{figure}

\noindent and  if  the  third residue  is  chosen  to  fold  by a  rotation  of
 $-\frac{\pi}{2}$ (mapping $f$), the new conformation will be:
$$(C_1,C_2,f(C_3),f(C_4),f(C_5),f(C_6)) = (0,0,1,2,2,2).$$
\noindent That is,

\begin{figure}[h]
\centering
\includegraphics[width=1.25in]{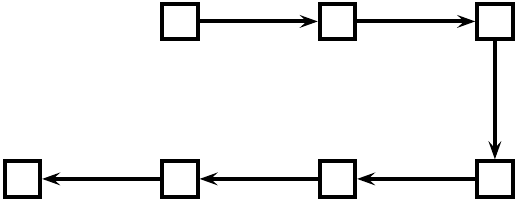}
\end{figure}
\end{example}

These considerations lead to the formalization described hereafter.

\subsection{Formalization and Notations}

Let $\mathsf{N}+1$ be a fixed number of amino acids, where $\mathsf{N}\in\mathds{N}^*$.
We define 
$$\check{\mathcal{X}}=\mathds{Z}/4\mathds{Z}^\mathsf{N}\times \llbracket -\mathsf{N};\mathsf{N} 
\rrbracket^\mathds{N}$$ 
as the phase space of all possible folding processes.
An element $X=(C,F)$ of this dynamical folding space is constituted by:
\begin{itemize}
\item A conformation of the $\mathsf{N}+1$ residues in absolute encoding: $C=(C_1,\hdots, C_\mathsf{N}) \in \mathds{Z}/4\mathds{Z}^\mathsf{N}$. Note that we do not require self-avoiding walks here.
\item A sequence $F \in \llbracket -\mathsf{N} ; \mathsf{N} \rrbracket^\mathds{N}$ of future folds such that, when $F_i \in \llbracket -\mathsf{N}; \mathsf{N} \rrbracket$ is $k$, it means that it occurs:
\begin{itemize}
\item a fold after the $k-$th residue by a rotation of $-\frac{\pi}{2}$ (mapping $f$) at the $i-$th step, if $k = F_i >0$, 
\item no fold at time $i$ if $k=0$,
\item a fold after the $|k|-$th residue by a rotation of $\frac{\pi}{2}$ (\emph{i.e.}, $f^{-1}$) at the $i-$th time, if $k<0$.
\end{itemize}
\end{itemize}
On this phase space, the protein folding dynamic in the 2D model can be formalized as follows.

\medskip

Denote by $i$ the map that transforms a folding sequence in its first term (\emph{i.e.}, in the first folding operation):
$$
\begin{array}{lccl}
i:& \llbracket -\mathsf{N};\mathsf{N} \rrbracket^\mathds{N} & \longrightarrow & \llbracket -\mathsf{N};\mathsf{N} \rrbracket \\
& F & \longmapsto & F^0,
\end{array}$$
by $\sigma$ the shift function over $\llbracket -\mathsf{N};\mathsf{N} \rrbracket^\mathds{N}$, that is to say,
$$
\begin{array}{lccl}
\sigma :& \llbracket -\mathsf{N};\mathsf{N} \rrbracket^\mathds{N} 
 & \longrightarrow & \llbracket -\mathsf{N};\mathsf{N} \rrbracket^\mathds{N} \\
& \left(F^k\right)_{k \in \mathds{N}} & \longmapsto 
 & \left(F^{k+1}\right)_{k \in \mathds{N}},
\end{array}$$
\noindent and by $sign$ the function:
$$
sign(x) = \left\{
\begin{array}{ll}
1 & \textrm{if } x>0,\\
0 & \textrm{if } x=0,\\
-1 & \textrm{else.}
\end{array}
\right.
$$ 
Remark that the shift function removes the first folding operation from the folding sequence $F$ once it has been achieved.

Consider now the map $G:\check{\mathcal{X}} \to \check{\mathcal{X}}$ defined by:
$$G\left((C,F)\right) = \left( f_{i(F)}(C),\sigma(F)\right),$$
\noindent where $\forall k \in \llbracket -\mathsf{N};\mathsf{N} \rrbracket$, 
$f_k: \mathds{Z}/4\mathds{Z}^\mathsf{N} \to \mathds{Z}/4\mathds{Z}^\mathsf{N}$
is defined by: 
\begin{flushleft}
$f_k(C_1,\hdots,C_\mathsf{N}) =$
\end{flushleft}
\begin{flushright}
$ (C_1,\hdots,C_{|k|-1},f^{sign(k)}(C_{|k|}),\hdots,f^{sign(k)}(C_\mathsf{N})).$
\end{flushright}
Thus the folding process of a protein $P$ in the 2D HP square lattice model, with initial conformation equal to $(0,0, \hdots, 0)$ in absolute encoding and a folding sequence equal to $(F^i)_{i \in \mathds{N}}$, is defined by the following dynamical system over $\check{\mathcal{X}}$:
$$
\left\{
\begin{array}{l}
X^0=((0,0,\hdots,0),F)\\
X^{n+1}=G(X^n), \forall n \in \mathds{N}.
\end{array}
\right.
$$

In other  words, at each step  $n$, if $X^n=(C,F)$, we  take the first
folding  operation to  realize, that  is  $i(F) =  F^0 \in  \llbracket
-\mathsf{N};\mathsf{N} \rrbracket$, we update the current conformation
$C$ by rotating all of  the residues coming after the $|i(F)|-$th one,
which means  that we replace the conformation  $C$ with $f_{i(F)}(C)$.
Lastly,  we remove  this  rotation  (the first  term  $F^0$) from  the
folding sequence $F$: $F$ becomes $\sigma(F)$.

\begin{example}
Let  us reconsider Example  \ref{ex1}.  The  unique iteration  of this
folding  process transforms  a point  of $\check{X}$  having  the form
$\left((0,0,0,1,1,1),(+3,     F^1,      F^2,     \hdots)\right)$     in
$G\left(((0,0,0,1,1,1),(+3,F^1,F^2, \hdots))\right),$ which is equal to
$\left((0,0,1,2,2,2),(F^1,F^2, \hdots)\right)$.
\end{example}

\begin{remark}
Such a formalization  allows the study of proteins  that never stop to
fold,  for   instance  due  to  never-ending   interactions  with  the
environment.
\end{remark}

\begin{remark}
A protein $P$ that has finished to fold, if such a protein exists, has
the form $(C,(0,0,0,\hdots))$, where $C$  is the final 2D structure of
$P$.   In this  case, we  can assimilate  a folding  sequence  that is
convergent  to  0,  \emph{i.e.},  of  the form  $(F^0,  \hdots,  F^n,0
\hdots)$, with the finite sequence $(F^0, \hdots, F^n)$.
\end{remark}

We will now introduce the SAW requirement in our formulation of the folding process in the 2D model.

\subsection{The SAW Requirement}

\subsubsection{Towards a Basic SAW Requirement Definition}

Let $\mathcal{P}$ denotes the 2D plane and
$$
\begin{array}{cccc}
p: & \mathds{Z}/4\mathds{Z}^\mathsf{N} & \to & \mathcal{P}^{\mathsf{N}+1} \\
 & (C_1, \hdots, C_\mathsf{N}) & \mapsto & (X_0, \hdots, X_\mathsf{N})
\end{array}
$$
where $X_0 = (0,0)$ and
$$
X_{i+1} = \left\{
\begin{array}{ll}
X_i + (1,0) & ~\textrm{if } c_i = 0,\\
X_i + (0,-1) & ~\textrm{if } c_i = 1,\\
X_i + (-1,0) & ~\textrm{if } c_i = 2,\\
X_i + (0,1) & ~\textrm{if } c_i = 3.
\end{array}
\right.
$$

The map $p$ transforms an  absolute encoding in its 2D representation.
For       instance,       $p((0,0,0,1,1,1))$       is       \linebreak
((0,0);(1,0);(2,0);(3,0);(3,-1);(3,-2);(3,-3)),  that  is,  the  first
figure of Example \ref{ex1}.

Now, for each $(P_0, \hdots, P_\mathsf{N})$ of $\mathcal{P}^{\mathsf{N}+1}$, we denote by $$support((P_0, \hdots, P_\mathsf{N}))$$ the set (with no repetition): $\left\{P_0, \hdots, P_\mathsf{N}\right\}$. For instance,
$$support\left(((0,0);(0,1);(0,0);(0,1))\right) = \left\{(0,0);(0,1)\right\}.$$

Then,

\begin{definition}
\label{def:SAW}
A conformation $(C_1, \hdots, C_\mathsf{N}) \in \mathds{Z}/4\mathds{Z}^{\mathsf{N}}$ satisfies the \emph{self-avoiding walk (SAW) requirement} iff the cardinality of $support(p((C_1, \hdots, C_\mathsf{N})))$ is $\mathsf{N}+1$.
\end{definition}

We can remark that Definition \ref{def:SAW} concerns only one conformation, and not a \emph{sequence} of conformations that occurs in a folding process.

\subsubsection{Understanding the so-called SAW Requirement for a Folding Process}

The next stage in the  formalization of the protein folding process in
the  2D model  as  a dynamical  system  is to  take  into account  the
self-avoiding   walk  (SAW)  requirement,   by  restricting   the  set
$\mathds{Z}/4\mathds{Z}^\mathsf{N}$  of all possible  conformations to
one  of   its  subsets.   That   is,  to  define  precisely   the  set
$\mathcal{C}(P)$ of  acceptable conformations of a  protein $P$ having
$\mathsf{N}+1$ residues.   This stage needs a clear  definition of the
SAW requirement.   However, as stated  above, Definition \ref{def:SAW}
only focus on the SAW requirement  of a given conformation, but not on
a complete folding process.   In our opinion, this requirement applied
to the whole folding process can be understood at least in four ways.

\medskip

In the first  and least restrictive approach, we  call it ``$SAW_1$'',
we  only  require  that  the  studied  conformation  satisfy  the  SAW
requirement of  Definition \ref{def:SAW}.  It is  not regarded whether
this conformation is the result  of a folding process that has started
from  $(0,0,\hdots,0)$.  Such  a SAW  requirement has  been  chosen by
authors   of   \cite{Crescenzi98}    when   they   have   proven   the
NP-completeness of the PSP problem.

The second  approach called $SAW_2$  requires that, starting  from the
initial  condition $(0,0,\hdots,  0)$, we  obtain by  a  succession of
folds a  final conformation  that is a  self-avoiding walk.   In other
words,  we want  that  the final  tree  corresponding to  the true  2D
conformation has  2 vertices with  1 edge and  $\mathsf{N}-2$ vertices
with 2 edges.  For instance,  the folding process of Figure \ref{saw2}
is acceptable in $SAW_2$, even if it presents residues that overlap in
an intermediate conformation. Such an approach corresponds to programs
that start from  the initial conformation $(0,0, \hdots,  0)$, fold it
several times according to  their embedding functions, and then obtain
a final  conformation on which the  SAW property is  checked: only the
last conformation has to satisfy the Definition~\ref{def:SAW}.

\begin{figure}
\centering
\caption{Folding process acceptable in $SAW_2$ but not in $SAW_3$}
\label{saw2}
\includegraphics[width=1.5in]{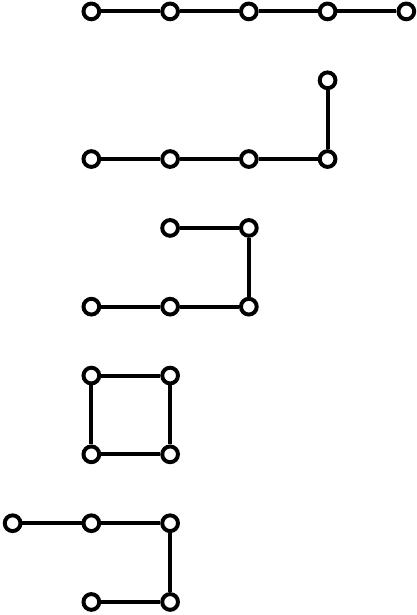}
\end{figure}

In the next  approach, namely the $SAW_3$ requirement,  it is demanded
that each  intermediate conformation, between the initial  one and the
returned  (final)  one,  satisfy  the  Definition  \ref{def:SAW}.   It
restricts        the       set       of        all       conformations
$\mathds{Z}/4\mathds{Z}^\mathsf{N}$, for a  given $\mathsf{N}$, to the
subset   $\mathfrak{C}_\mathsf{N}$   of  conformations   $(C_1,\hdots,
C_\mathsf{N})$ such  that $\exists n \in  \mathds{N}^*,$ $\exists k_1,
\hdots,     k_n     \in     \llbracket     -\mathsf{N};     \mathsf{N}
\rrbracket$, $$(C_1,  \hdots, C_\mathsf{N}) =  G^n\left(((0,0, \hdots,
0),(k_1,  \hdots, k_n))\right)$$ $\forall  i \leqslant n$,
the  conformation  $G^i\left(((0,  \hdots,  0),  (k_1,  \hdots,  k_n))
\right)$ satisfies the Definition \ref{def:SAW}.  This $SAW_3$ folding
process requirement, which is perhaps  the most usual meaning of ``SAW
requirement''  in  the  literature  (it  is  used,  for  instance,  in
\cite{DBLP:conf/cec/HiggsSHS10,Braxenthaler97,DBLP:conf/cec/IslamC10,Unger93,DBLP:conf/cec/HorvathC10}),
has  been  chosen  in  this  research work.   In  this  approach,  the
acceptable  conformations  are  obtained  starting  from  the  initial
conformation $(0,0, \hdots, 0)$ and are such that all the intermediate
conformations satisfy the Definition \ref{def:SAW}.

Finally, the $SAW_4$ approach is  a $SAW_3$ requirement in which there
is no intersection of vertex  or edge during the transformation of one
conformation to  another. For  instance, the transformation  of Figure
\ref{saw4} is  authorized in the  $SAW_3$ approach but refused  in the
$SAW_4$ one:  during the rotation  around the residue identified  by a
cross, the  structure after this residue will  intersect the remainder
of the  ``protein''.  In  this last approach  it is impossible,  for a
protein folding from  one plane conformation to another  plane one, to
use the whole space to achieve this folding.

\begin{figure}
\centering
\caption{Folding process acceptable in $SAW_3$ but not in $SAW_4$}
\label{saw4}
\includegraphics[width=3.25in]{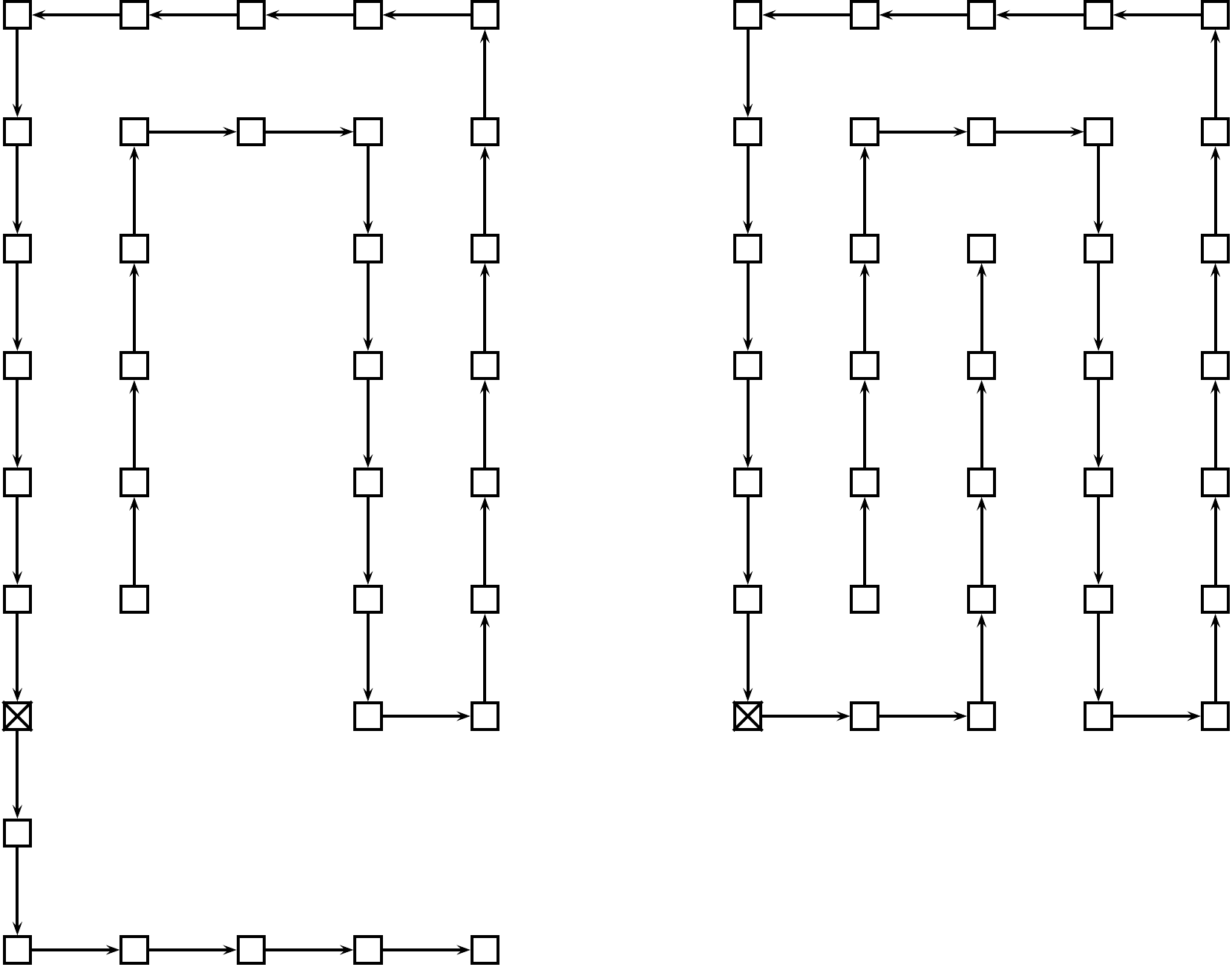}
\end{figure}

Obviously, $SAW_4  \subsetneq SAW_3 \subseteq  SAW_2 \subseteq SAW_1$.
Indeed, it is easy to prove  that $SAW_3 \subsetneq SAW_2$ too, but we
do not  know whether  $SAW_2 \subsetneq SAW_1$  or not.  The  study of
these  four   sets,  their  cardinality,   characterization,  and  the
consequence of  the fact that  the NP-completeness of the  PSP problem
has  been established  in $SAW_1$,  will be  investigated in  a future
work.

In the present  document we cannot decide what  is the most reasonable
approach       between      $SAW_i$,      \linebreak       $i      \in
\left\{1,\hdots,4\right\}$, that is, the  most close to a true natural
protein  folding.   However,  due   to  its  complexity,  the  $SAW_4$
requirement is never  used by tools that embed a  2D HP square lattice
model for protein structure prediction.  That is why we will consider,
in this research work, that the so-called ``SAW requirement'' for a 2D
folding   process  corresponds  to   the  $SAW_3$   approach  detailed
previously.  Indeed,  it is  the most  used one, and  we only  want to
study  the  ability of  PSP  software to  find  the  most probable  2D
conformation.   Thus,   in  what   follows,  the  set   of  acceptable
conformations   with   $\mathsf{N}+1$  residues   will   be  the   set
$\mathfrak{C}_\mathsf{N}$      (\emph{i.e.},     $\mathcal{C}(P)     =
\mathfrak{C}_\mathsf{N}$).

\subsection{A Metric for the Folding Process}

We define a metric $d$ over $\mathcal{X} = \mathfrak{S}_\mathsf{N} \times \llbracket -\mathsf{N};\mathsf{N} \rrbracket^\mathds{N}$ by:
$$\displaystyle{d(X, \check{X}) = d_C(C, \check{C}) + d_F (F, \check{F}).}$$
where
$$ \left\{
\begin{array}{l}
\delta(a,b)=0 \textrm{ if } a=b, \textrm{ otherwise }\delta(a,b)=1, \\
d_C(C, \check{C}) = \displaystyle{\sum_{k=1}^\mathsf{N} \delta(C_k,\check{C}_k) 
 2^{\mathsf{N}-k}}, \\
d_F (F, \check{F}) = \displaystyle{\dfrac{9}{2 \mathsf{N}} \sum_{k=0}^\infty 
 \dfrac{|F^k-\check{F}^k|}{10^{k+1}}.}
\end{array}
\right.$$

This new distance for the dynamical description of the protein folding
process in the 2D HP square lattice model can be justified as follows.
The integral  part of  the distance between  two points  $X=(C,F)$ and
$\check{X}=(\check{C},\check{F})$   of   $\mathcal{X}$  measures   the
differences  between   the  current   2D  conformations  of   $X$  and
$\check{X}$.  More precisely,  if $d_C(C,\check{C})$ is in $\llbracket
2^{N-(k+1)};2^{N-k}  \rrbracket$,  then the  first  $k$  terms in  the
acceptable   conformations  $C$   and   $\check{C}$  (their   absolute
encodings) are  equal, whereas the  $k+1^{th}$ terms differ:  their 2D
conformations will  differ after the \linebreak  $k+1-$th residue.  If
the  decimal part of  $d(X, \check{X})$  is between  $10^{-(k+1)}$ and
$10^{-k}$, then the next k  foldings of $C$ and $\check{C}$ will occur
in the same place (residue),  same order, and same angle.  The decimal
part of $d(X,\check{X})$ will then  decrease as the duration where the
folding process is similar increases.

More  precisely, $F^k  =
\check{F}^k$ (same  residue and same  angle of rotation at  the $k-$th
stage of the  2D folding process) if and only  if the $k+1^{th}$ digit
of this decimal part is  0.  Lastly, $\frac{9}{\mathsf{2N}}$ is just a
normalization factor.

For instance, if we know  where are now the $\mathsf{N}+1$ residues of
our   protein  $P$   in  the   lattice  (knowledge   of   the  correct
conformation), and  if we  have discovered what  will be its  $k$ next
foldings, then we know that the point $X=(C,F)$ describing the folding
process  of  the   considered  protein  in  the  2D   model,  will  be
``somewhere'' into  the ball $\mathcal{B}(C, 10^{-k})$,  that is, very
close to the point $(C,F)$ if $k$ is large.

\begin{example}
Let us consider two points 
\begin{itemize}
\item $X = \left((0,0,0,1,1,1),(3,-4,2)\right)$,
\item and $X' = \left((0,0,0,1,1,1),(3,-4,-6)\right)$ 
\end{itemize}
of $\mathcal{X}$.  We  note $X=(C,F)$ and $X'=(C',F)$.  $d_C(C,C')=0$,
then these two points have  the same current (first) conformation.  As
$d_F(F,F') =  \frac{9}{2\times 6}\frac{|2-(-6)|}{10^3} =  0.006$ is in
$\left[10^{-3};10^{-2}\right[$,  we  can  deduce  that  the  two  next
    foldings of $X$ and of  $X'$ will lead to identical conformations,
    whereas  the  third  folding  operation  will  lead  to  different
    conformations.  A  possible way to  represent these two  points of
    the phase space is to draw the successive conformations induced by
    these points, as  illustrated in Figure \ref{fig:representation du
      phase space}.
\end{example}

\begin{figure}
\centering
\caption{Representation of $X = \left((0,0,0,1,1,1),(3,-4,2)\right)$ and $X' = \left((0,0,0,1,1,1),(3,-4,-6)\right)$ of the phase space $\mathcal{X}$ ($X$ is in left part of the figure, $X'$ is its right part).}
\label{fig:representation du phase space}
\includegraphics[width=2.75in]{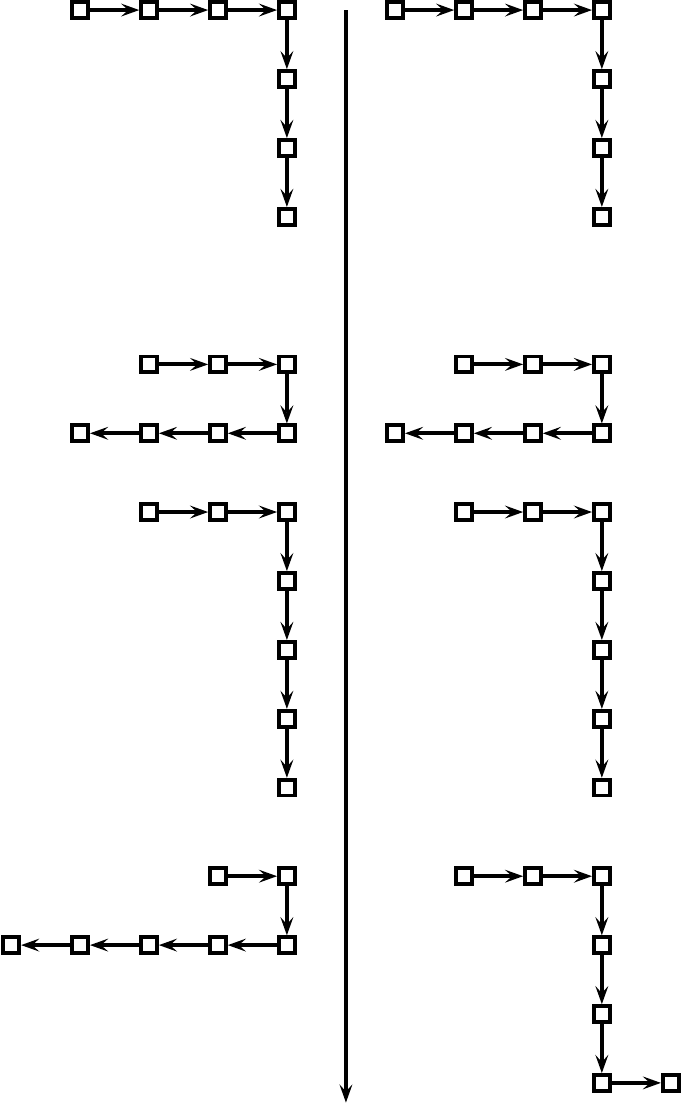}
\end{figure}

\begin{example}
Figure   \ref{fig:representation  du   phase   space2}  contains   the
representation        of       the two       ``points''        $X       =
\left((0,0,0,1,1,1),(3,-4,2)\right)$         and \linebreak        $X'         =
\left((0,0,1,2,2,2),(-4,-5)\right)$.     Let   $X=(C,F)$   and
$X'=(C',F')$.  We have  
$$d_C(C,C') = 2^{6-3}+2^{6-4}+2^{6-5}+2^{6-6} = 15$$
and
$d_F =
\frac{9}{12}\left(\frac{|3-(-4)|}{10}+\frac{|-4-(-5)|}{100}+\frac{|2-0|}{1000}\right)
=    0.534,$
then $d(X,X') =  15.534$.  As 15 is in  $\left[2^3;2^4\right[$, we can
    conclude   that  the   absolute  encodings   of  the   two  initial
    conformations are similar for the first $k=N-4=2$ terms.
\end{example}


\section{Folding Process in 2D Model is Chaotic}
\label{sec:HP=chaos}

\subsection{Motivations}

In our topological  description of the protein folding  process in the
2D model,  all the information  is embedded into the  folding sequence
$F$.   Indeed, roughly speaking,  it is  as if  nature has  a function
$\mathcal{N}$  that   translates  a   protein  $P$  having   a  linear
conformation  $(0,...,0)$  into  an  environment  $E$,  in  a  folding
sequence  $F$, \emph{i.e.},  \linebreak  $F=\mathcal{N}(P,E)$.  Having
this ``natural'' folding sequence~$F$, we  are able to obtain its true
conformation      in     the      2D      model,     by      computing
$G^n\left(((0,\hdots,0),F)\right)$, where $n$ is  the size of $F$.  On
our side, we have only a  partial knowledge of the environment $E$ and
of  the  protein $P$  (exact  interactions  between  atoms).  We  thus
consider $\check{P}$  and $\check{E}$, as close  as we can  to $P$ and
$E$    respectively.     Moreover,    we    have    only    a    model
$\check{\mathcal{N}}$  of  $\mathcal{N}$  as,  for  instance,  we  use
various  approximations:  models for  free  energy, approximations  of
hydrophobic/hydrophilic areas and  electro-polarity, etc.  This is why
we        can        only        deduce        an        approximation
$\check{F}=\check{\mathcal{N}}(\check{P},\check{E})$  of  the  natural
folding  sequence $F=\mathcal{N}(P,E)$.   One important  motivation of
this work is to determine whether, having an approximation $\check{F}$
of   $F$,    we   obtain    a   final   conformation    $\check{C}   =
G^{\check{n}}\left(((0,\hdots,0),\check{F})\right)_0$   close  to  the
natural  conformation  $C  = G^{n}\left(((0,\hdots,0),F)\right)_0$  or
not.  In this last sentence, $n$  and $\check{n}$ are the sizes of $F$
and  $\check{F}$  respectively, and  the  terms ``approximation''  and
``close'' can be understood by using $d_F$ and $d_C$, respectively. To
sum up,  even if we cannot  have access with an  infinite precision to
all  of   the  forces  that   participate  to  the   folding  process,
\emph{i.e.},  even  if  we   only  know  an  approximation  ${X'}^0  =
\left((0,\hdots,0),\check{F}\right)$              of             $X^0=
\left((0,\hdots,0),F\right)$,   can  we   claim  that   the  predicted
conformation                       ${X'}^{n_1}                       =
G^{n_1}\left(((0,\hdots,0),\check{F})\right)$  still remains  close to
the          true          conformation          ${X}^{n_2}          =
G^{n_2}\left(((0,\hdots,0),F)\right)$?   Or, on  the  contrary, do  we
have a chaotic behavior, a kind of butterfly effect that magnifies any
error on the evaluation of the forces in presence?

Raising such a  question leads to the study  of the dynamical behavior
of the folding process.

\begin{figure}
\centering
\caption{Representation  of $X  = \left((0,0,0,1,1,1),(3,-4,2)\right)$
  and  $X' =  \left((0,0,1,2,2,2),(-4,-5)\right)$ of  the  phase space
  $\mathcal{X}$ ($X$ is in left part  of the figure, $X'$ is its right
  part).}
\label{fig:representation du phase space2}
\includegraphics[width=3in]{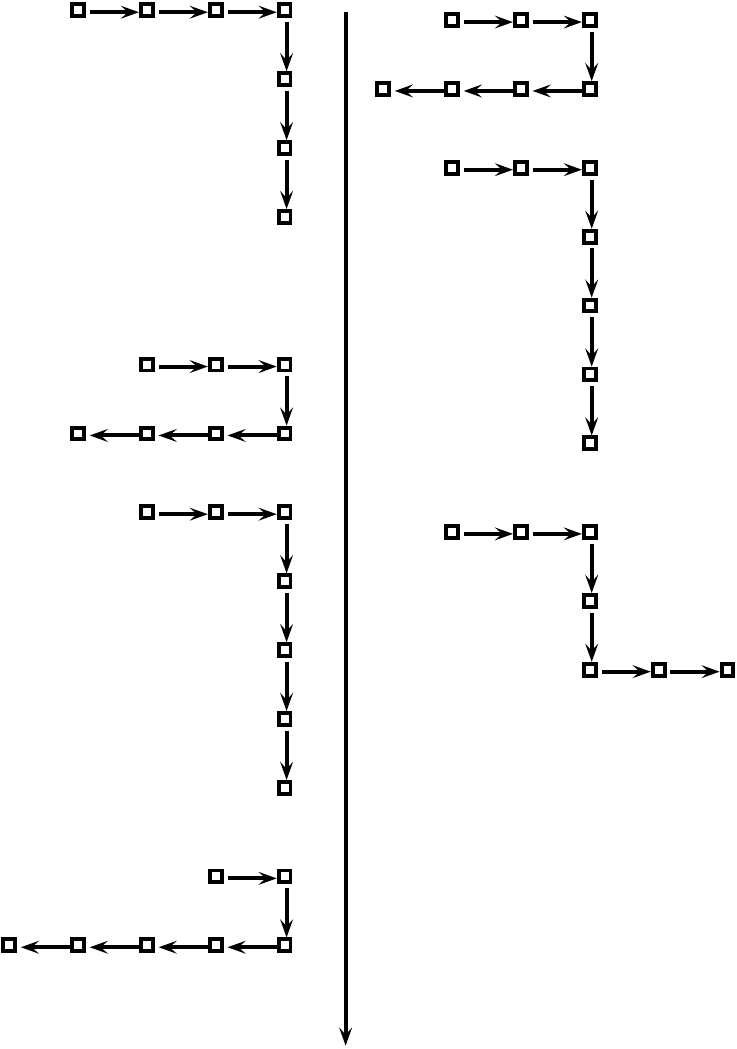}
\end{figure}

\subsection{Continuity of the Folding Process}

We will now give a first  proof of the chaotic behavior of the protein
folding dynamics in  the 2D model.  To do so,  we must establish first
that  $G$  is  a  continuous  map on  $(\mathcal{X},d)$.  Indeed,  the
mathematical theory of chaos only studies dynamical systems defined by
a  recurrence   relation  of  the  form   $X^{n+1}=G(X^n)$,  with  $G$
continuous.

\begin{proposition}
$G$ is a continuous map on $(\mathcal{X},d)$.
\end{proposition}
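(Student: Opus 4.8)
The plan is to prove continuity of $G$ on $(\mathcal{X},d)$ by working at an arbitrary point $X=(C,F)$ and exploiting the product structure of both the space and the map. Since $G((C,F)) = (f_{i(F)}(C), \sigma(F))$ and the metric splits as $d = d_C + d_F$, I would show that each coordinate of $G$ behaves well. The key observation is that the conformation part $\mathds{Z}/4\mathds{Z}^{\mathsf{N}}$ is \emph{finite}, so $d_C$ induces the discrete topology: a single point is isolated once $d_C(C,\check C)<1$, which forces $C=\check C$ exactly. This is the structural fact that makes the whole proof essentially a verification rather than an $\varepsilon$–$\delta$ estimate on the conformation side.

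Concretely, I would fix $X=(C,F)$ and any target precision, and look for $\check X=(\check C,\check F)$ close to $X$. The strategy is: \emph{first} make $d_C(C,\check C) < 1$, which by the definition of $d_C$ (a sum of terms $\delta(C_k,\check C_k)2^{\mathsf{N}-k}$ with the smallest nonzero contribution being $2^0=1$) forces $\check C = C$ identically. \emph{Second}, making $d_F(F,\check F)$ small forces the leading terms of the folding sequences to agree; in particular, since $d_F$ weights $|F^0-\check F^0|$ by $\tfrac{9}{2\mathsf{N}}\cdot\tfrac{1}{10}$, a sufficiently small $d_F(F,\check F)$ guarantees $F^0=\check F^0$, i.e. $i(F)=i(\check F)$. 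Once $\check C = C$ and $i(\check F)=i(F)$, the first coordinate of $G(\check X)$ equals $f_{i(F)}(C)$ exactly, so the conformation part of the image is \emph{unchanged}, contributing zero to $d(G(X),G(\check X))$.

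For the shift coordinate I would estimate $d_F(\sigma(F),\sigma(\check F))$ directly in terms of $d_F(F,\check F)$. Writing out the series, each term $|F^{k+1}-\check F^{k+1}|/10^{k+1}$ of $d_F(\sigma(F),\sigma(\check F))$ corresponds to the term $|F^{k+1}-\check F^{k+1}|/10^{k+2}$ in $d_F(F,\check F)$, so shifting multiplies each contribution by $10$; hence $d_F(\sigma(F),\sigma(\check F)) \leq 10\, d_F(F,\check F)$ (the leading term of $F$, which disappears under $\sigma$, only helps). This is the only genuine inequality in the argument and it is clean. Combining the two coordinates: given $\varepsilon>0$, choosing $\check X$ with $d(X,\check X) < \eta$ for $\eta$ small enough to force both $\check C=C$ and $F^0=\check F^0$ (and small enough that $10\eta<\varepsilon$) yields $d(G(X),G(\check X)) = 0 + d_F(\sigma(F),\sigma(\check F)) \leq 10\,d_F(F,\check F) \leq 10\eta < \varepsilon$.

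**The main obstacle** is not analytic but bookkeeping: one must correctly read off from the specific constants in $d_C$ and $d_F$ the thresholds that force exact equality of $C$ and of $F^0$, and verify that the normalization factor $\tfrac{9}{2\mathsf{N}}$ and the base-$10$ weights interact as claimed when the shift is applied. A subtle point worth checking is that $f_{i(F)}$ depends on $F$ only through $i(F)=F^0$, so continuity of the conformation coordinate reduces entirely to the discreteness of $d_C$ together with the local constancy of $i$; there is no need to prove continuity of the maps $f_k$ in any quantitative sense, since each $f_k$ is a fixed function on the finite set $\mathds{Z}/4\mathds{Z}^{\mathsf{N}}$. Thus the proposition reduces to the single estimate on $\sigma$, and I expect the write-up to be short once the threshold constants are pinned down.
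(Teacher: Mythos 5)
Your proof is correct and rests on exactly the same three observations as the paper's: the discreteness of $d_C$ (small distance forces $\check C = C$), the local constancy of $i$ (small $d_F$ forces $F^0=\check F^0$, hence the same fold is applied), and the estimate $d_F(\sigma(F),\sigma(\check F)) \leq 10\, d_F(F,\check F)$ for the shift. The only difference is presentational — you argue via $\varepsilon$--$\delta$ while the paper uses the sequential characterization of continuity — and your explicit threshold for forcing $F^0=\check F^0$ (of order $\tfrac{9}{20\mathsf{N}}$) is in fact more careful than the paper's stated bound of $\tfrac{1}{10}$.
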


\begin{proof}
We  will use the  sequential characterization  of the  continuity. Let
$(X^n)_{n \in \mathds{N}}  = \left((C^n,F^n)\right)_{n \in \mathds{N}}
\in   \mathcal{X}^\mathds{N},$  such   that  $X^n   \rightarrow   X  =
(\check{C},\check{F})$.  We  will  then show  that  $G\left(X^n\right)
\rightarrow G(X)$. Let us remark  that $\forall n \in \mathds{N}, F^n$
is a sequence: $F$ is thus a sequence of sequences.

On the one hand, as $X^n=(C^n,F^n) \rightarrow (\check{C},\check{F})$,
we have $d_C\left(C^n,\check{C}\right) \rightarrow 0$, thus $\exists
n_0 \in \mathds{N},$ $n \geqslant n_0$ $\Rightarrow
d_C(C^n,\check{C})=0$. That is, $\forall n \geqslant n_0$ and $\forall
k \in \llbracket 1;\mathsf{N} \rrbracket$, $\delta(C_k^n,\check{C}_k)
= 0$, and so $C^n = \check{C}, \forall n \geqslant n_0.$ Additionally,
since $d_F(F^n,\check{F}) \rightarrow 0$, $\exists n_1 \in \mathds{N}$
such that we have $d_F(F^n_1, \check{F}) \leqslant \frac{1}{10}$. As
a consequence, $\exists n_1 \in \mathds{N},$ $\forall n \geqslant
n_1$, the first term of the sequence $F^n$ is $\check{F}^0$: $i(F^n) =
i(\check{F})$. So, $\forall n \geqslant max(n_0,n_1),$
$f_{i(F^n)}\left(C^n\right)=
f_{i\left(\check{F}\right)}\left(\check{C}\right)$, and then
$f_{i(F^n)}\left(C^n\right)$ $\rightarrow$
$f_{i\left(\check{F}\right)}\left(\check{C}\right)$.

On  the  other   hand,  $\sigma(F^n)  \rightarrow  \sigma(\check{F})$.
Indeed,  $F^n   \rightarrow  \check{F}$  implies  $\sum_{k=0}^{\infty}
\frac{|  \left(F^n\right)^k-\check{F}^k  |}{10^{k+1}} \rightarrow  0$,
from  which   we  obtain  $\frac{1}{10}   \sum_{k=0}^{\infty}  \frac{|
  \left(F^n\right)^{k+1}-\check{F}^{k+1}  |}{10^{k+1}} \rightarrow 0$,
so   $\sum_{k=0}^{\infty}   \frac{|  \sigma(F^n)^k-\sigma(\check{F})^k
  |}{10^{k+1}}$ converges towards 0. Finally, $\sigma(F^n) \rightarrow
\sigma(\check{F})$.

Since we have shown that $f_{i(F^n)}\left(C^n\right)$ $\rightarrow$
$f_{i\left(\check{F}\right)}\left(\check{C}\right)$ and $\sigma(F^n)
\rightarrow \sigma(\check{F})$, we conclude that
$G\left(X^n\right) \rightarrow G(X)$.
\end{proof}

It is now possible to study the chaotic behavior of the folding
process.

\subsection{A First Fundamental Lemma}

Let us  start by introducing the following  fundamental lemma, meaning
that we can transform any  acceptable conformation to any other one in
$SAW_3$, by finding a relevant folding sequence.

\begin{lemma}
\label{lem}
$\forall  C,C'$  in   $\mathfrak{C}_\mathsf{N},$  $\exists  n  \in
  \mathds{N}^*$  and $k_1, \hdots,  k_n$ in  $\llbracket -\mathsf{N};
\mathsf{N}    \rrbracket$    s.t.   
$$G^n\left((C,(k_1,    \hdots,
k_n,0,\hdots))\right) = \left(C',(0,\hdots,0)\right).$$
\end{lemma}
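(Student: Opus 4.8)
The plan is to route the transformation through the fully unfolded conformation $O=(0,\hdots,0)$, exploiting the fact that every fold is invertible. Since $C$ and $C'$ both belong to $\mathfrak{C}_\mathsf{N}$, the definition of $\mathfrak{C}_\mathsf{N}$ (the $SAW_3$ requirement) guarantees folding sequences $(a_1,\hdots,a_p)$ and $(b_1,\hdots,b_q)$, with $p,q\geq 1$, that fold $O$ into $C$ and into $C'$ respectively, while keeping every intermediate conformation inside $\mathfrak{C}_\mathsf{N}$. My candidate sequence is the concatenation $(k_1,\hdots,k_n)=(-a_p,\hdots,-a_1,b_1,\hdots,b_q)$ with $n=p+q$: the first $p$ folds undo the folding of $C$ back to $O$, and the remaining $q$ folds replay the folding that produces $C'$.

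First I would record the reversibility of a single fold. Because $f_{-k}$ applies $f^{sign(-k)}=f^{-sign(k)}=\bigl(f^{sign(k)}\bigr)^{-1}$ to exactly the same suffix of residues (indices $|k|$ through $\mathsf{N}$) on which $f_k$ acts, one has $f_{-k}\circ f_k=\mathrm{id}$ on $\mathds{Z}/4\mathds{Z}^\mathsf{N}$. Hence, if $O=D_0\to D_1\to\cdots\to D_p=C$ denotes the forward folding of $C$ (so $D_j=f_{a_j}(D_{j-1})$), then applying $-a_p,\hdots,-a_1$ to $C$ retraces the same conformations in reverse order, $D_p\to D_{p-1}\to\cdots\to D_0=O$. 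I would then observe that concatenating with the forward sequence for $C'$ yields, after $n$ applications of $G$, the conformation $C'$; and since the future-fold sequence $F=(k_1,\hdots,k_n,0,\hdots)$ is emptied by $n$ shifts of $\sigma$, its remaining component becomes $(0,\hdots)$, giving $G^n\bigl((C,F)\bigr)=(C',(0,\hdots))$ exactly as claimed.

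The step requiring the most care is checking that the whole trajectory stays inside the phase space, i.e. that each intermediate conformation lies in $\mathfrak{C}_\mathsf{N}$ so that $G$ is legitimately applicable at every step. For the second half this is immediate, since the conformations $E_j$ produced from $O$ by the prefix $(b_1,\hdots,b_j)$ are $SAW_3$-valid by hypothesis. For the first half the point is that the conformations visited while unfolding are precisely $D_p,\hdots,D_0$, which coincide (in reverse order) with those visited while folding $C$ from $O$; each $D_j$ is therefore reachable from $O$ by the $SAW_3$-valid prefix $(a_1,\hdots,a_j)$ and so belongs to $\mathfrak{C}_\mathsf{N}$. Finally I would dispose of the degenerate cases $C=C'$ and $C=O$ by noting that the definition of $\mathfrak{C}_\mathsf{N}$ already supplies reaching sequences of length at least $1$, so $n=p+q\geq 2\geq 1$ and the requirement $n\in\mathds{N}^*$ is automatically met.
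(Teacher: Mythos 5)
Your proposal is correct and follows essentially the same route as the paper's own proof: both use the $SAW_3$ definition of $\mathfrak{C}_\mathsf{N}$ to obtain folding sequences from $(0,\hdots,0)$ to $C$ and to $C'$, and take $(k_1,\hdots,k_n)$ to be the reversed, negated sequence for $C$ concatenated with the sequence for $C'$. Your additional checks (the identity $f_{-k}\circ f_k=\mathrm{id}$, the fact that the unfolding retraces $SAW_3$-valid intermediate conformations, and the degenerate cases) are details the paper leaves implicit, and they strengthen rather than change the argument.
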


\begin{proof}
As we consider conformations of $\mathfrak{C}_\mathsf{N}$, we take place in the $SAW_3$ requirement, and thus there exist \linebreak $n_1, n_2 \in \mathds{N}^*$ and $l_1, \hdots, l_{n_1}, m_1, \hdots, m_{n_2}$ in $\llbracket -\mathsf{N}; \mathsf{N} \rrbracket$ such that $C = G^{n_1}\left(((0,...,0),(l_1, \hdots, l_{n_1}))\right)$ and \linebreak $C' = G^{n_2}\left(((0,...,0),(m_1, \hdots, m_{n_2}))\right)$.
The result of the lemma is then obtained with $$(k_1, \hdots, k_n) = (-l_{n_1},-l_{n_1-1},\hdots,-l_1,m_1,\hdots, m_{n_2}).$$
\end{proof}

\subsection{Regularity and Transitivity}

Let  us recall  that the  first component  $X_0$ of  $X=(C,F)$  is the
current conformation $C$ of the protein and the second component $X_1$
is its future folding process $F$. We will now prove that,

\begin{proposition}
Folding process in 2D model is regular.
\end{proposition}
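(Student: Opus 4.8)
The plan is to prove regularity in its equivalent form, namely that the set of periodic points of $G$ is dense in $(\mathcal{X},d)$. So I would fix an arbitrary point $X=(C,F)\in\mathcal{X}$ and an arbitrary $\varepsilon>0$, and construct a periodic point $\tilde{X}$ with $d(X,\tilde{X})<\varepsilon$. The guiding observation is that the two pieces of the metric decouple: $d_C$ is integer-valued, so to kill it I simply keep the conformation fixed, whereas $d_F$ is governed by how many leading terms of the two folding sequences agree. Hence I would look for a point of the form $\tilde{X}=(C,\tilde{F})$, which immediately gives $d_C(C,C)=0$, and I would make $\tilde{F}$ agree with $F$ on a long enough prefix while being purely periodic as a sequence.

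Concretely, first choose an integer $k_0$ with $10^{-k_0}<\varepsilon$. Running the dynamics for $k_0$ steps produces $G^{k_0}(X)=\left(C^*,\sigma^{k_0}(F)\right)$, where $C^*\in\mathfrak{C}_\mathsf{N}$ is the conformation reached by applying the folds $F^0,\dots,F^{k_0-1}$ to $C$. Since both $C^*$ and $C$ lie in $\mathfrak{C}_\mathsf{N}$, Lemma~\ref{lem} supplies a finite folding sequence $w_1,\dots,w_m\in\llbracket-\mathsf{N};\mathsf{N}\rrbracket$ that folds $C^*$ back onto $C$. I would then let $\tilde{F}$ be the purely periodic sequence obtained by repeating the block $(F^0,\dots,F^{k_0-1},w_1,\dots,w_m)$ of length $p=k_0+m$, and set $\tilde{X}=(C,\tilde{F})$.

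It then remains to verify the two required facts. For periodicity, one traces the orbit: the first $k_0$ applications of $G$ carry the conformation $C$ to $C^*$ (these folds coincide with those of $F$), the next $m$ applications carry $C^*$ back to $C$ by the choice of the $w_i$, and the shift has advanced the folding sequence by exactly one full period, so $\sigma^p(\tilde{F})=\tilde{F}$; hence $G^p(\tilde{X})=(C,\tilde{F})=\tilde{X}$. For proximity, $d_C(C,C)=0$, while $\tilde{F}$ and $F$ share their first $k_0$ terms, so using $|\tilde{F}^j-F^j|\leqslant 2\mathsf{N}$ together with the normalization factor $\tfrac{9}{2\mathsf{N}}$ gives $d_F(F,\tilde{F})\leqslant \frac{9}{2\mathsf{N}}\cdot 2\mathsf{N}\sum_{j\geqslant k_0}10^{-(j+1)}=10^{-k_0}<\varepsilon$.

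The delicate point, and the reason Lemma~\ref{lem} is invoked rather than folding ``by hand'', is the SAW bookkeeping: every intermediate conformation along the whole periodic orbit must remain inside $\mathfrak{C}_\mathsf{N}$, so that $\tilde{X}$ is a genuine point of the phase space and an honest periodic point. Along the prefix this is inherited from $X\in\mathcal{X}$ and the $G$-invariance of $\mathcal{X}$; along the return block it is exactly the content of Lemma~\ref{lem}, which produces its connecting sequence within the $SAW_3$ requirement. Thus the construction cannot create an illegal conformation, and I expect this validity check to be the only real obstacle, the remainder being the routine geometric-series estimate above.
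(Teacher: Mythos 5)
Your construction is correct and is essentially the paper's own proof: both keep the conformation equal to $C$, copy the first $k_0\approx-\log_{10}(\varepsilon)$ folds of $F$ so that $d(X,\tilde{X})<\varepsilon$, invoke Lemma~\ref{lem} applied to $G^{k_0}(X)_0$ and $C$ to fold the reached conformation back onto $C$, and then repeat this finite block periodically to get $G^p(\tilde{X})=\tilde{X}$. The only differences are presentational: you write $\tilde{F}$ explicitly as a periodic word and spell out the geometric-series estimate $d_F(F,\tilde{F})\leqslant 10^{-k_0}$, which the paper states only implicitly through its conditions (1)--(4).
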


\begin{proof}
Let $X=(C,F) \in \mathcal{X}$ and $\varepsilon > 0$. 
Then we define $k_0=-\lfloor log_{10} (\varepsilon) \rfloor$ and $\tilde{X}$ such that:
\begin{enumerate}
\item $\tilde{X}_0 = C$,
\item $\forall k \leqslant k_0, G^k(\tilde{X})_1 = G^k(X)_1$,
\item $\forall i \in \llbracket 1; n \rrbracket,
G^{k_0+i}(\tilde{X})_1 = k_i$,
\item $\forall i \in \mathds{N}, G^{k_0+n+i+1}(\tilde{X})_1 =
G^i(\tilde{X})_1$,
\end{enumerate}
where $k_1, \hdots, k_n$ are integers given by Lemma \ref{lem} with $C=G^{k_0}(X)_0$ and $C'= X_0$. 
Such an $\tilde{X}$ is a periodic point for $G$ into the ball $\mathcal{B}(X,\varepsilon)$. 
(1) and (2) are to make $\tilde{X}$
$\varepsilon-$close to $X$, (3) is for mapping the conformation $G^{k_0}(\tilde{X})_0$ into $C$ in at most $n$ foldings. 
Lastly, (4) is for the periodicity of the folding process.
\end{proof}


Let  us now  consider the  second property  required in  the Devaney's
definition.   Instead of  proving  the transitivity  of  $G$, we  will
establish its strong transitivity:

\begin{definition}
A dynamical system $\left( \mathcal{X}, f\right)$ is strongly
transitive if $\forall x,y \in \mathcal{X},$ $\forall r > 0,$ $\exists
z \in \mathcal{X},$ $d(z,x) \leqslant r \Rightarrow$ $\exists n \in
\mathds{N}^*,$ $f^n(z)=y$.
\end{definition}

In other words, for all $x,y \in \mathcal{X}$, it is possible to find
a point $z$ in the neighborhood of $x$ such that an iterate $f^n(z)$
is $y$.
Obviously, strong transitivity implies transitivity. Let us now prove
that,

\begin{proposition}
Folding process in 2D model is strongly transitive.
\end{proposition}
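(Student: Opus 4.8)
The plan is to prove strong transitivity directly from the definition, exploiting the structure of the dynamical system $G$ where all the control information lives in the folding sequence $F$. Given arbitrary points $x = (C,F)$ and $y = (C',F')$ in $\mathcal{X}$ and any radius $r > 0$, I must construct a point $z$ with $d(z,x) \leqslant r$ and an integer $n \in \mathds{N}^*$ such that $G^n(z) = y$. The key observation is that the conformation component of $z$ should agree with $C$ (so the integral part $d_C$ of the distance vanishes), and $z$ should share a long enough common prefix of future folds with $x$ so that the decimal part $d_F$ falls below $r$.

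First I would set $k_0 = -\lfloor \log_{10}(r) \rfloor$, mirroring the device used in the regularity proof, so that any point whose first $k_0$ future folds coincide with those of $x$ lies within the ball $\mathcal{B}(x,r)$. I would then define $z = (C, \widehat{F})$ where the first $k_0$ terms of $\widehat{F}$ copy those of $F$, guaranteeing $d_C(C,C) = 0$ and $d_F(F,\widehat{F}) \leqslant r$, hence $d(z,x) \leqslant r$. The heart of the construction is choosing the remaining terms of $\widehat{F}$: after the first $k_0$ folds, the conformation of $z$ will have become $G^{k_0}(z)_0$, some element $C''$ of $\mathfrak{C}_\mathsf{N}$. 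Invoking Lemma~\ref{lem} with the pair $C''$ and $C'$, I obtain integers $k_1,\hdots,k_m$ in $\llbracket -\mathsf{N};\mathsf{N}\rrbracket$ steering $C''$ to $C'$ in $m$ folds. Placing these as the next $m$ terms of $\widehat{F}$ brings the conformation component exactly to $C'$ after $n := k_0 + m$ steps.

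The remaining subtlety is the folding-sequence component: after $n$ applications of $G$, the shift $\sigma$ has consumed the first $n$ terms of $\widehat{F}$, so $G^n(z)_1$ equals whatever tail I appended beyond position $n$. To force $G^n(z) = y = (C',F')$ exactly, I would simply define the terms of $\widehat{F}$ from index $n$ onward to equal $F'$, i.e.\ set $\widehat{F}^{\,k_0 + m + j} = (F')^{j}$ for all $j \in \mathds{N}$. Then $G^n(z) = (C', F') = y$, completing the argument. One must check that $C''$ indeed lies in $\mathfrak{C}_\mathsf{N}$ so that Lemma~\ref{lem} applies; this holds because $z$ is built to respect the $SAW_3$ requirement through its first $k_0$ folds, so every intermediate conformation — and in particular $C''$ — is an acceptable conformation.

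The main obstacle I anticipate is the interplay between the $SAW_3$ restriction and the free choice of the prefix folds: copying the first $k_0$ folds of $F$ is only legitimate if those folds keep $z$ inside $\mathfrak{C}_\mathsf{N}$, and if the resulting intermediate $C''$ is itself a valid SAW conformation so that Lemma~\ref{lem} can be applied to it. Since $x \in \mathcal{X} = \mathfrak{S}_\mathsf{N} \times \llbracket -\mathsf{N};\mathsf{N}\rrbracket^\mathds{N}$ already encodes an admissible folding process, the first $k_0$ folds of $F$ produce admissible intermediate conformations, so $C'' \in \mathfrak{C}_\mathsf{N}$ and the construction goes through. The rest is bookkeeping on the shift and on the decimal estimate $d_F \leqslant r$, which is routine given the explicit form of $d_F$ and the choice of $k_0$.
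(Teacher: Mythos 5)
Your proposal is correct and follows essentially the same route as the paper's own proof: the same choice $k_0 = -\lfloor \log_{10}(\cdot)\rfloor$, the same construction of the point by copying the first $k_0$ folds of the source, then inserting the steering folds obtained from Lemma~\ref{lem} applied to the conformation reached after $k_0$ iterations, and finally appending the target's folding sequence as the tail. The only differences are bookkeeping (an off-by-one in the iteration count) and your explicit remark about $C''$ belonging to $\mathfrak{C}_\mathsf{N}$, a point the paper leaves implicit.
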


\begin{proof}
Let $X_A=(C_A,F_A)$, $X_B=(C_B, F_B)$,  and $\varepsilon > 0$. We will
show that $X \in  \mathcal{B}\left(X_A, \varepsilon\right)$ and $n \in
\mathds{N}$ can be found such  that $G^n(X)=X_B$. Let $k_0 = - \lfloor
log_{10}  (\varepsilon  )  \rfloor$  and  $\check{X}=G^{k_0}(C_A,F_A)$
denoted  by  $\check{X}=(\check{C},\check{F})$.   According  to  Lemma
\ref{lem} applied to $\check{C}$ and $C_B$, $\exists k_1, \hdots, k_n$
in    $\llbracket    -\mathsf{N},    \mathsf{N}    \rrbracket$    such
that  $$G^n\left((\check{C},  (k_1,  \hdots,  k_n,0,\hdots))\right)  =
\left(C_B, (0, \hdots )\right).$$

Let us define $X=(C,F)$ in the following way: 
\begin{enumerate}
\item $C=C_A$,
\item $\forall k \leqslant k_0, F^k=F_A^k$,
\item $\forall i \in \llbracket 1; n \rrbracket, F^{k_0+i} =
k_i$,
\item $\forall i \in \mathds{N}, F^{k_0+n+i+1}=F_B^i$.
\end{enumerate}
This point $X$ is thus an element of $\mathcal{B}(X_A,\varepsilon)$
(due to $1,2$) being such that $G^{k_0+n+1}(X) = X_B$ (by
using $3,4$). As a consequence, $G$ is strongly transitive.
\end{proof}

Strong transitivity states that being as close as possible of the true
folding process (2D model) is not a guarantee of success.  Indeed, let
$P$ be a protein under interest and $F$ its natural folding process in
the 2D model.   Then, for all possible conformation  $C$ of the square
lattice, there exists a folding sequence $\check{F}$ very close to $F$
leading to $C$.   More precisely, for any $\varepsilon  > 0$ (as small
as  possible),  an  infinite   number  of  folding  sequences  are  in
$\mathcal{B}_{d_F}(F,\varepsilon)$  and  lead   to  $C$.   The  strong
transitivity  property  implies  that  without the  knowledge  of  the
\emph{exact} initial condition (the  natural folding process, and thus
the  exact   free  energy),   all  the  conformations   are  possible.
Additionally, no  conformation of the square lattice  can be discarded
when studying a protein folding in the 2D HP square lattice model: the
dynamical  system obtained  by such  a formalization  is intrinsically
complicated and cannot be decomposed or simplified.  Furthermore, this
trend  to  visit  the  whole  space  of  acceptable  conformations  is
counteracted  by elements  of  regularity stated  before:  it is  even
impossible to dress a kind  of qualitative description of the dynamics
in the 2D square lattice model,  as two points close to each other can
have fundamentally different behaviors.

\subsection{Chaotic behavior of the folding process}

As $G$ is regular and (strongly) transitive, we have:

\begin{theorem}
\label{leth}
The folding process $G$ in the 2D model is chaotic according to Devaney.
\end{theorem}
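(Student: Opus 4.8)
The plan is simply to verify the two hypotheses of Devaney's definition and then conclude, since the genuine labour has already been done in the preceding subsections. Recall that $G$ is chaotic on $(\mathcal{X},\tau)$ as soon as it is \emph{regular} and \emph{topologically transitive}; as $(\mathcal{X},d)$ is a metric space carrying its metric topology, and as the continuity of $G$ on $(\mathcal{X},d)$ has already been established, it suffices to assemble the two properties for $G$. First I would invoke the regularity proposition, which states that the set of periodic points of $G$ is dense in $\mathcal{X}$: this furnishes verbatim the regularity ingredient, with nothing left to prove.

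Next I would deduce topological transitivity from the strong transitivity already shown. Given two nonempty open sets $U,V \subseteq \mathcal{X}$, I would choose $x \in U$ and $y \in V$, and pick $r>0$ with $\mathcal{B}(x,r) \subseteq U$. Strong transitivity then supplies a point $z$ with $d(z,x) \leqslant r$ together with an integer $n \in \mathds{N}^*$ satisfying $G^n(z)=y$; consequently $G^n(z) \in G^n(U) \cap V$, so that $G^n(U) \cap V \neq \emptyset$ with $k=n>0$, which is exactly the definition of topological transitivity. Combining this with the density of periodic points gives both conditions of Devaney's definition, and the theorem follows at once.

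The hard part does not reside in this theorem at all: everything substantive—the continuity of $G$, the transformation Lemma~\ref{lem}, the explicit construction of nearby periodic points for regularity, and the witness built in the strong transitivity proof—has already been carried out upstream. The only point deserving a moment's attention is the passage from the closed-ball membership $d(z,x)\leqslant r$ provided by strong transitivity to the open-set membership $z\in U$ required by the definition of transitivity; this is handled by taking $r$ strictly smaller than the radius of an open ball contained in $U$, so that the witness $z$ still lies in $U$. With that minor adjustment the synthesis is immediate.
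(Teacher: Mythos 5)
Your proposal is correct and takes essentially the same approach as the paper: Theorem~\ref{leth} is obtained there in one line by combining the regularity proposition with the strong transitivity proposition, relying on the earlier remark that strong transitivity implies topological transitivity. Your explicit derivation of that implication (shrinking $r$ so the closed-ball witness $z$ lands inside the open set $U$) merely spells out carefully what the paper dismisses as obvious.
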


Consequently  this   process  is  highly  sensitive   to  its  initial
conditions.   If the  2D  model can  accurately  describe the  natural
process, then Theorem~\ref{leth} implies that even a minute difference
on an intermediate conformation of  the protein, in forces that act in
the  folding process,  or in  the  position of  an atom,  can lead  to
enormous differences in its final conformation, even over fairly small
timescales.  This  is the so-called butterfly  effect.  In particular,
it seems very difficult to predict, in this 2D model, the structure of
a given  protein by  using the knowledge  of the structure  of similar
proteins.  Let us  remark that the whole 3D  folding process with real
torsion angles is  obviously more complex than this  2D HP model.  And
finally, that chaos refers to  our incapacity to make good prediction,
it does not mean that the biological process is a random one.

Before  studying some  practical aspects  of this  unpredictability in
Section \ref{Sec:Consequences}, we will initiate a second proof of the
chaotic behavior of this process and deepen its chaotic properties.

\section{Outlines of a second proof}
\label{sec:CI=chaos}


In this section a second proof of the chaotic behavior of the protein
folding process is given. It is proven that the folding dynamics can
be modeled as chaotic iterations (CIs). CIs are a tool used in
distributed computing and in the computer science security field
\cite{guyeuxTaiwan10} that has been established to be chaotic
according to Devaney \cite{guyeux10}.

This second proof is the occasion to introduce these CIs, which will
be used at the end of this paper to study whether a chaotic behavior
is really more difficult to learn with a neural network than a
``normal'' behavior.

\subsection{Chaotic Iterations: Recalls of Basis}

Let us consider a \emph{system} with a finite number $\mathsf{N} \in
\mathds{N}^*$ of elements (or \emph{cells}), so that each cell has a
Boolean \emph{state}. A sequence of length $\mathsf{N}$ of Boolean
states of the cells corresponds to a particular \emph{state of the
system}. A sequence, which elements are subsets of $\llbracket
1;\mathsf{N} \rrbracket $, is called a \emph{strategy}. The set of all
strategies is denoted by $\mathbb{S}$ and the set $\mathds{B}$ is for
the Booleans $\{0,1\}$.

\begin{definition}
\label{Def:chaotic iterations}
Let $f:\mathds{B}^{\mathsf{N}}\longrightarrow \mathds{B}^{\mathsf{N}}$
be a function and $S\in \mathbb{S}$ be a strategy. The so-called
\emph{chaotic iterations} (CIs) are defined by $x^0\in
\mathds{B}^{\mathsf{N}}$ and $\forall n\in \mathds{N}^{\ast }, \forall i\in
\llbracket1;\mathsf{N}\rrbracket ,$
$$ 
x_i^n=\left\{
\begin{array}{ll}
 x_i^{n-1} & \text{ if } i \notin S^n \\ 
 \left(f(x^{n-1})\right)_i & \text{ if } i \in S^n.
\end{array}\right.
$$
\end{definition}

In other words, at the $n^{th}$ iteration, only the $S^{n}-$th cells
are ``iterated''. Let us notice that the term ``chao\-tic'', in the
name of these iterations, has \emph{a priori} no link with the
mathematical theory of chaos recalled previously. We will now recall
that CIs can be written as a dynamical system, and characterize
functions $f$ such that their CIs are chaotic according to Devaney
\cite{guyeux09}.

\subsection{CIs and Devaney's chaos}

Let
$f: \mathds{B}^{\mathsf{N}}\longrightarrow \mathds{B}^{\mathsf{N}}$.
We define $F_{f}:$
$\llbracket1;\mathsf{N}\rrbracket\times \mathds{B}^{\mathsf{N}}\longrightarrow
\mathds{B}^{\mathsf{N}}$
by:
$$
F_{f}(k,E)=\left( E_{j} \cdot\delta (k,j)+f(E)_{k} \cdot
\overline{\delta (k,j)}\right)_{j\in \llbracket1;\mathsf{N}\rrbracket},
$$
where + and $\cdot$ are the Boolean addition and product operations,
and $\overline{x}$ is for the negation of $x$.

We have proven in \cite{guyeux09} that chaotic iterations can be
described by the following dynamical system:
$$
\left\{
\begin{array}{l}
X^{0}\in \tilde{\mathcal{X}} \\
X^{k+1}=\tilde{G}_{f}(X^{k}),
\end{array}
\right.
$$
where $\tilde{G}_{f}\left((S,E)\right) =\left( \sigma
(S),F_{f}(i(S),E)\right)$, and $\tilde{\mathcal{X}}$ is a metric space
for an ad hoc distance such that $\tilde{G}$ is continuous on
$\mathcal{X}$ \cite{guyeux09}.

Let us now consider the following oriented graph, called \emph{graph
of iterations}. Its vertices are the elements of
$\mathds{B}^\mathsf{N}$, and there is an arc from $x = (x_1, \hdots,
x_i, \hdots, x_\mathsf{N}) \in \mathds{B}^\mathsf{N}$ to $x = (x_1,
\hdots, \overline{x_i}, \hdots, x_\mathsf{N})$ if and only if
$F_f(i,x) = (x_1, \hdots, \overline{x_i}, \hdots, x_\mathsf{N})$. If
so, the label of the arc is $i$. In the following, this graph of
iterations will be denoted by $\Gamma(f)$.

We have proven in \cite{bcgr11:ip} that:

\begin{theorem}
\label{Th:Caracterisation des IC chaotiques}
Functions $f : \mathds{B}^{n} \to \mathds{B}^{n}$ such that
$\tilde{G}_f$ is chaotic according to Devaney, are functions such that
the graph $\Gamma(f)$ is strongly connected.
\end{theorem}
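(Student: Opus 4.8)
The plan is to prove the characterization in both directions, resting on one elementary observation about how $\tilde{G}_f$ acts on the Boolean component. Since $F_f(i,E)$ coincides with $E$ on every cell except possibly cell $i$ (where it installs $f(E)_i$), a single application of $\tilde{G}_f$ changes the Boolean state exactly along an arc of $\Gamma(f)$. Iterating, the set of Boolean states reachable from a given configuration $E$ (as the strategy ranges over $\mathbb{S}$) is precisely the set of vertices reachable from $E$ by a directed path in $\Gamma(f)$. Thus strong connectivity of $\Gamma(f)$ is exactly the combinatorial statement ``the Boolean part can be steered from any configuration to any other,'' and the whole theorem reduces to turning this into the two Devaney conditions.

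First I would record the shape of the small open balls of $\tilde{\mathcal{X}}$. Because the Boolean part is finite while the strategy part carries a Cantor-type metric, a ball $\mathcal{B}((S,E),\varepsilon)$ of sufficiently small radius forces the Boolean state to equal $E$ exactly and pins down a finite prefix $S^0,\dots,S^{k_0}$ of the strategy, with $k_0 \to \infty$ as $\varepsilon \to 0$. This reduces every topological assertion to a finite combinatorial one about paths in $\Gamma(f)$.

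For the implication ``$\Gamma(f)$ strongly connected $\Rightarrow$ $\tilde{G}_f$ chaotic,'' I would handle transitivity and regularity separately. For transitivity, take nonempty open sets $U,V$, pick $(S,E)\in U$ and a target $(S',E')\in V$ together with the prefixes and Boolean states they impose; apply $\tilde{G}_f$ enough times to consume the prefix forced by $U$, reaching an intermediate Boolean state $\tilde{E}$, then insert as the next strategy labels a directed path from $\tilde{E}$ to $E'$ in $\Gamma(f)$ (which exists by strong connectivity), and finally append the prefix that $V$ imposes. The point whose strategy is this concatenation lies in $U$ and has an iterate in $V$. For regularity, given $(S,E)$ and $\varepsilon$, keep $E$ and the prefix of length $k_0$, let $\tilde{E}$ be the Boolean state reached after those $k_0$ steps, pick a directed path from $\tilde{E}$ back to $E$ in $\Gamma(f)$, and repeat the resulting finite strategy block indefinitely: this yields a periodic point of $\tilde{G}_f$ inside $\mathcal{B}((S,E),\varepsilon)$. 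Hence $\tilde{G}_f$ is both regular and transitive. For the converse, I would argue contrapositively: if $\Gamma(f)$ is not strongly connected there are vertices $x,y$ with no directed path from $x$ to $y$; taking $U$ a ball forcing Boolean state $x$ and $V$ a ball forcing $y$, the reachability observation gives $\tilde{G}_f^{k}(U)\cap V=\emptyset$ for every $k$, contradicting the transitivity that chaos entails.

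The main obstacle will be the bookkeeping in the forward direction: making the metric-to-combinatorics translation fully rigorous, namely how small $\varepsilon$ must be to fix the Boolean state and the prefix length, and checking that the concatenated strategies keep the constructed points inside the prescribed balls while the shift $\sigma$ consumes those prefixes in lock-step with the Boolean evolution. Once this correspondence between balls and finite paths in $\Gamma(f)$ is nailed down, both implications become short.
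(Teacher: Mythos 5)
The paper itself does not prove this theorem: it is imported verbatim from \cite{bcgr11:ip}, so there is no in-paper proof to compare against. Your sketch is nevertheless correct, and it is essentially the argument of that reference: one application of $\tilde{G}_f$ moves the Boolean coordinate along an arc of $\Gamma(f)$ (or leaves it fixed when $f(x)_i = x_i$), small balls in $\tilde{\mathcal{X}}$ pin down the Boolean state and a finite strategy prefix, transitivity and regularity follow by concatenating a forced prefix with a connecting path (repeated periodically for regularity), and the converse is the contrapositive reachability argument using only transitivity. The remaining work is exactly the bookkeeping you identify — the quantitative link between $\varepsilon$ and the forced prefix length, and the remark that ``no-op'' iterations keep the dynamics' reachability relation equal to path-reachability in $\Gamma(f)$ (including length-zero paths) — none of which hides a genuine obstacle.
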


We will now show that the protein folding process can be modeled as
chaotic iterations, and conclude the proof by using the theorem
recalled above. 

\subsection{Protein Folding as Chaotic Iterations}

The  attempt to  use  chaotic  iterations in  order  to model  protein
folding  can be  justified as  follows.  At each  iteration, the  same
process is  applied to the system (\emph{i.e.},  to the conformation),
that is  the folding  operation. Additionally, it  is not  a necessity
that all of the residues fold at each iteration: indeed it is possible
that, at  a given iteration, only  some of these  residues folds. Such
iterations, where not all the cells of the considered system are to be
updated, are exactly the iterations modeled by CIs.

Indeed, the protein folding process with folding sequence $(F^n)_{n
\in \mathds{N}}$ consists in the following chaotic iterations: $C^0 =
(0,0, \hdots, 0)$ and,
$$
C_{|i|}^{n+1} = \left\{
\begin{array}{ll}
C_{|i|}^n & \textrm{if } i \notin S^n\\
f^{sign(i)}(C^n)_i & \textrm{else}
\end{array}
\right.,
$$ 
where the chaotic strategy is defined by $\forall n \in \mathds{N},
\linebreak S^n=\llbracket -\mathsf{N}; \mathsf{N} \rrbracket \setminus
\llbracket -F^n; F^n \rrbracket$.

Thus, to prove that the  protein folding process is chaotic as defined
by Devaney, is equivalent to prove that the graph of iterations of the
CIs defined above is strongly connected. This last fact is obvious, as
it  is  always  possible  to  find  a folding  process  that  map  any
conformation $(C_1, \hdots, C_\mathsf{N}) \in \mathfrak{C}_\mathsf{N}$
to   any   other  \linebreak   $(C_1',   \hdots,  C_\mathsf{N}')   \in
\mathfrak{C}_\mathsf{N}$ (this is Lemma \ref{lem}).

Let us finally remark that it is easy to study processes s.t. more
than one fold occur per time unit, by using CIs. This point will be
deepened in a future work. We will now investigate some consequences
resulting from the chaotic behavior of the folding process.

\section{Qualitative and quantitative evaluations}

Behaviors qualified as ``chaos'' are too complicated to be encompassed
by only one  rigorous definition, as perfect as  it could be.  Indeed,
the  mathematical   theory  of  chaos   brings  several  nonequivalent
definitions  for a  complex, unpredictable  dynamical system,  each of
them  highlighting this  complexity in  a well-defined  but restricted
understanding.   This  is  why,  in  this  section,  we  continue  the
evaluation of the chaotic behavior  of the 2D folding dynamical system
initiated by the proof of the Devaney's chaos.

\subsection{Qualitative study}
\label{QUALITATIVE MEASURE}

First of all, the transitivity property implies the indecomposability
of the system:

\begin{definition}
A dynamical system $\left( \mathcal{X}, f\right)$ is indecomposable if
it is not the union of two closed sets $A, B \subset \mathcal{X}$ such
that $f(A) \subset A, f(B) \subset B$.
\end{definition}

Thus it is  impossible to reduce, in the 2D model,  the set of protein
foldings  in  order  to  simplify its  complexity.   Furthermore,  the
folding process has the instability property:

\begin{definition}
A dynamical system $\left( \mathcal{X}, f\right)$ is unstable if for
all $x \in \mathcal{X}$, the orbit $\gamma_x:n \in \mathds{N}
\longmapsto f^n(x)$ is unstable, that is: $\exists \varepsilon > 0,$
$\forall \delta > 0,$ $\exists y \in \mathcal{X},$ $\exists n \in
\mathds{N},$ $d(x,y) < \delta$ and $d\left(\gamma_x(n),
\gamma_y(n)\right) \geqslant \varepsilon.$
\end{definition}

This property,  which is  implied by the  sensitive dependence  to the
initial conditions, leads to the fact that in all of the neighborhoods
of  any  $x$, there  are  points that  are  separated  from $x$  under
iterations of $f$. We thus can  claim that the behavior of the folding
process is unstable.

\subsection{Quantitative measures}
\label{QUANTITATIVE MEASURE}

\label{par:Sensitivity}
One of the most famous measures in the theory of chaos is the constant
of sensitivity  given in Definition  \ref{sensitivity}. Intuitively, a
function  $f$ having  a constant  of  sensitivity equal  to $\delta  $
implies that  there exists points  arbitrarily close to any  point $x$
that \emph{eventually} separate  from $x$ by at least  $\delta $ under
some iterations of  $f$. This induces that an  arbitrarily small error
on  an initial condition  \emph{may} be  magnified upon  iterations of
$f$.  The  sensitive  dependence   on  the  initial  conditions  is  a
consequence   of   regularity   and   transitivity   in   a   metrical
space~\cite{Banks92}.  However, the  constant of  sensitivity $\delta$
can be obtained by proving the property without using Banks' theorem.

\begin{proposition}
Folding process  in the 2D  model has sensitive dependence  on initial
conditions on $(\mathcal{X},d)$ and  its constant of sensitivity is at
least equal to $2^{\mathsf{N}-1}$.
\end{proposition}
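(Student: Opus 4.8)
The plan is to exhibit, for every $X=(C,F)\in\mathcal{X}$ and every $\varepsilon>0$, a point $Y\in\mathcal{B}(X,\varepsilon)$ whose orbit eventually separates from that of $X$ by at least $2^{\mathsf{N}-1}$ in the conformation part of the metric. I would take $\delta=2^{\mathsf{N}-1}$ and, exactly as in the proofs of regularity and strong transitivity, set $k_0=-\lfloor \log_{10}(\varepsilon)\rfloor$, so that any $Y=(C,\check{F})$ whose folding sequence $\check{F}$ coincides with $F$ on its first $k_0$ terms satisfies $d(X,Y)=d_F(F,\check{F})\leqslant 10^{-k_0}\leqslant\varepsilon$ (since $d_C(C,C)=0$), hence $Y\in\mathcal{B}(X,\varepsilon)$.

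First I would let both orbits run together for $k_0$ steps. Since $Y$ shares the conformation $C$ and the first $k_0$ folds of $X$, the two orbits are identical up to time $k_0$; in particular $G^{k_0}(X)$ and $G^{k_0}(Y)$ have the \emph{same} conformation $C^{\ast}=G^{k_0}(X)_0\in\mathfrak{C}_\mathsf{N}$. I would then force a divergence at the next step by choosing the $(k_0{+}1)$-th fold of $Y$ to act on the \emph{first} residue: put $\check{F}^{\,k_0}=b$ with $b=-sign(F^{k_0})$ when $F^{k_0}\neq 0$ and $b=+1$ when $F^{k_0}=0$, and $\check{F}^{\,k}=0$ for every $k>k_0$ (so that $Y$ is frozen afterwards).

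The key point, and the step I expect to be the main obstacle, is to check that this construction keeps $Y$ inside the SAW-constrained phase space, i.e.\ that every conformation along the orbit of $Y$ still lies in $\mathfrak{C}_\mathsf{N}$. For the first $k_0$ steps this is automatic, because those conformations coincide with the (valid) ones of $X$. For the divergent step I would use the observation that a fold at the first residue, $f_{\pm 1}$, rotates \emph{all} of $X_1,\dots,X_\mathsf{N}$ by $\pm\frac{\pi}{2}$ about the fixed origin $X_0=(0,0)$; being a global isometry it cannot create a self-intersection, so $f_{\pm 1}(C^{\ast})$ is self-avoiding whenever $C^{\ast}$ is, and therefore remains in $\mathfrak{C}_\mathsf{N}$. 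The subsequent freezing steps leave the conformation unchanged and thus valid. This is precisely why I restrict the divergent move to a $\pm 1$ fold rather than an arbitrary one: a general fold of a SAW need not be a SAW, whereas a first-residue fold always is.

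It then remains to evaluate the separation at time $n=k_0+1$. There $Y$ applies the global rotation by $b$ to $C^{\ast}$ while $X$ applies $f_{F^{k_0}}$; comparing the two, the first residue is changed differently (by $b$ on the $Y$ side, and by $0$ or $sign(F^{k_0})=-b$ on the $X$ side), so the first components of $G^{k_0+1}(X)_0$ and $G^{k_0+1}(Y)_0$ differ. Since the first coordinate carries weight $2^{\mathsf{N}-1}$ in $d_C$, I obtain $d_C\bigl(G^{k_0+1}(X)_0,G^{k_0+1}(Y)_0\bigr)\geqslant 2^{\mathsf{N}-1}$, whence $d\bigl(G^{k_0+1}(X),G^{k_0+1}(Y)\bigr)\geqslant 2^{\mathsf{N}-1}$. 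In fact the chosen $b=-sign(F^{k_0})$ shifts \emph{every} residue in a direction opposite to $X$'s, so all $\mathsf{N}$ coordinates differ, yielding the stronger separation $2^{\mathsf{N}}-1$ and a strict inequality for $\mathsf{N}\geqslant 2$. This establishes sensitive dependence on initial conditions with constant of sensitivity at least $2^{\mathsf{N}-1}$.
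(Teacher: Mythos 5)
Your proposal is correct and follows essentially the same route as the paper's own proof: let the two orbits agree for $k_0\approx-\log_{10}(\varepsilon)$ steps, then make the perturbed point diverge by a fold at the \emph{first} residue, with the sign chosen so that it cannot coincide with $X$'s own fold, forcing the heaviest-weighted coordinate of the two conformations to differ and hence $d_C\geqslant 2^{\mathsf{N}-1}$. Your additional touches --- checking that a first-residue fold is a global isometry about the origin and therefore preserves the SAW property, and observing that your choice $b=-sign(F^{k_0})$ makes all $\mathsf{N}$ coordinates differ, yielding the stronger separation $2^{\mathsf{N}}-1$ --- are refinements the paper omits (it simply sets $\tilde{F}^{k_0+1}=1$, or $-F^{k_0+1}$ when $|F^{k_0+1}|=1$, and bounds the distance below by the first coordinate's weight), not a genuinely different argument.
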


\begin{proof}
Let $X = (C,F) \in \mathcal{X}$, $r>0$, $B =
\mathcal{B}\left(X,r\right)$ an open ball centered in $X$, and $k_0
\in \mathds{Z}$ such that $10^{-k_0-1} \leqslant r < 10^{-k_0}$. We
define $\tilde{X}$ by:
\begin{itemize}
\item $\tilde{C} = C$,
\item $\tilde{F}^k = F^k$, $\forall k \in \mathds{N}$ such that $k
\leqslant k_0$,
\item $\tilde{F}^{k_0+1} = 1$ if $\left|F^{k_0+1}\right| \neq 1$, and
$\tilde{F}^{k_0+1} = - F^{k_0+1}$ else.
\item $\forall k \geqslant k_0+2, \tilde{F}^{k} = - F^{k}$.
\end{itemize} 
Only two cases can occur:
\begin{enumerate}
\item If $\left|F^{k_0+1}\right| \neq 1$, then
\begin{flushleft}
$d\left( G^{k_0+1}\left(X\right), G^{k_0+1}\left(\tilde{X}\right)\right)$
\end{flushleft}
\begin{flushright}
$\begin{array}{l}
 \displaystyle{= 2^{\mathsf{N}-1}+2^{\mathsf{N}-F^{k_0+1}}+\dfrac{9}{2\mathsf{N}} 
 \sum_{k=k_0+1}^{\infty} \dfrac{\left| F^k - \tilde{F}^k \right|}{10^{k+1}}}\\
 \displaystyle{= 2^{\mathsf{N}-1}+2^{\mathsf{N}-F^{k_0+1}}+\dfrac{9}{2\mathsf{N}} 
 \sum_{k=k_0+1}^{\infty} \dfrac{2 \mathsf{N}}{10^{k+1}}}\\ 
\displaystyle{= 2^{\mathsf{N}-1}+2^{\mathsf{N}-F^{k_0+1}}+9 \dfrac{1}{10^{k_0+2}} 
 \dfrac{1}{1 - \dfrac{1}{10}} }\\
\displaystyle{= 2^{\mathsf{N}-1}+2^{\mathsf{N}-F^{k_0+1}}+\dfrac{1}{10^{k_0+1}}.}\\ 
\end{array}$
\end{flushright}
\item Else, $d\left( G^{k_0+1}\left(X\right), G^{k_0+1}\left(\tilde{X}\right)\right) 
 = 2^{\mathsf{N}-1}+\dfrac{1}{10^{k_0+1}}.$
\end{enumerate}

In all of these cases, the sensibility to the initial condition is
greater than $2^{\mathsf{N}-1}$.
\end{proof}

Let us now recall another common quantitative measure of disorder of a dynamical system.

\begin{definition}
A function $f$ is said to have the property of \emph{expansivity} if
$$
\exists \varepsilon >0,\forall x\neq y,\exists n\in \mathbb{N}%
,d(f^{n}(x),f^{n}(y))\geqslant \varepsilon .
$$
\end{definition}

Then $\varepsilon $ is the \emph{constant of expansivity} of $f$: an
arbitrarily small error on any initial condition is \emph{always}
amplified until $\varepsilon $.

\begin{proposition}
The folding process in the 2D model is an expansive chaotic system on
$(\mathcal{X},d)$. Its constant of expansivity is at least equal to 1.
\end{proposition}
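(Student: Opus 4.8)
The plan is to prove expansivity directly with constant $\varepsilon = 1$, exploiting the fact that the integral part $d_C$ of the metric takes integer values whose smallest nonzero value is $2^{\mathsf{N}-\mathsf{N}} = 1$ (attained when two conformations differ only at the last residue). Let $X = (C,F)$ and $X' = (C',F')$ be two distinct points of $(\mathcal{X},d)$. The argument would split according to whether the current conformations already differ or not. If $C \neq C'$, then there is at least one index $k$ with $\delta(C_k,C'_k)=1$, so $d_C(C,C') \geqslant 2^{\mathsf{N}-k} \geqslant 2^{\mathsf{N}-\mathsf{N}} = 1$, whence $d(X,X') \geqslant 1$ and $n=0$ already witnesses the property.

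The substantial case is $C = C'$ with $F \neq F'$. Here I would set $k = \min\{j \in \mathds{N} : F^j \neq F'^j\}$. Since the first $k$ folding instructions of $F$ and $F'$ coincide and act on the same starting conformation, the conformation components of $G^k(X)$ and $G^k(X')$ are equal; call this common conformation $\tilde{C}$. Applying $G$ once more, the two points fold $\tilde{C}$ with the two \emph{distinct} operators $f_{F^k}$ and $f_{F'^k}$, so the goal becomes to show that $f_{F^k}(\tilde{C})$ and $f_{F'^k}(\tilde{C})$ differ in $d_C$ by at least $1$, which gives $d\!\left(G^{k+1}(X),G^{k+1}(X')\right) \geqslant 1$.

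To carry out that comparison I would rely on the elementary arithmetic of $\mathds{Z}/4\mathds{Z}$: for every $x$ one has $f(x)=x+1 \neq x$, $f^{-1}(x)=x-1 \neq x$, and $f(x) \neq f^{-1}(x)$ since they differ by $2 \not\equiv 0 \pmod 4$. Writing $m=|F^k|$ and $m'=|F'^k|$, the operator $f_{F^k}$ rewrites exactly the residues of index $\geqslant m$ (none if $F^k=0$), and similarly for $f_{F'^k}$. Distinguishing the three possibilities — one of the two instructions is $0$; same pivot $m=m'$ with opposite signs; or distinct pivots $m \neq m'$ — one checks in each case that there is a first residue index $\mu \leqslant \mathsf{N}$ which is genuinely altered by one operator but left fixed by the other (in the opposite-sign case, altered by both but to different values, $+1$ versus $-1$). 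Hence $d_C\!\left(f_{F^k}(\tilde{C}),f_{F'^k}(\tilde{C})\right) \geqslant 2^{\mathsf{N}-\mu} \geqslant 1$.

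The main obstacle is precisely this last case analysis: one must exclude any accidental coincidence of the two folded conformations. The decisive point — and the reason this argument yields exactly the constant $1$ and no larger — is the mod $4$ structure: were we working modulo $2$, a clockwise and an anticlockwise fold at the same pivot would produce identical residues and the argument would collapse, whereas modulo $4$ they always differ. Since the lower bound $1$ is actually realized (for instance when the unique disagreement occurs at residue $\mathsf{N}$), we conclude that the folding process is expansive with constant of expansivity at least $1$, and, being already Devaney-chaotic by Theorem~\ref{leth}, it is an expansive chaotic system.
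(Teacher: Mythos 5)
Your proof is correct and follows essentially the same route as the paper's: split on whether the current conformations already differ, otherwise locate the first index $k_0$ where the folding sequences disagree, and bound from below the $d_C$-distance between the two images of the common conformation under $f_{F^{k_0}}$ and $f_{F'^{k_0}}$ by a power of $2$ that is at least $1$. The only differences are ones of care, in your favor: you make explicit the case analysis (one instruction equal to $0$, equal pivots with opposite signs, distinct pivots) and the mod-$4$ fact underlying the paper's displayed inequality $d_C \geqslant 2^{\mathsf{N}-\max\left(\left|F^{k_0}\right|,\left|F'^{k_0}\right|\right)}$, and your indexing (separation observed at $G^{k+1}$, with the common conformation taken at $G^{k}$) corrects a harmless off-by-one in the paper's write-up.
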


\begin{proof}
Let $X=(C,F)$ and $X'=(C',F')$ such that $X \neq X'$.
\begin{itemize}
\item If $C \neq C'$, then $\exists k_0 \in \llbracket 1;\mathsf{N}
\rrbracket, C_{k_0} \neq C_{k_0}'$. So, $$d\left( G^0(X) , G^0(X')
\right) \geqslant 2^{\mathsf{N}-k_0} \geqslant 1.$$
\item Else $F' \neq F$. Let $k_0 = min \left\{ k \in \mathds{N}, F^k
\neq F'^k \right\}.$ Then $\forall k < k_0, G^k(X) = G^k(X')$. Let
$\check{X} = (\check{C}, \check{F}) = G^{k_0-1}(X) =
G^{k_0-1}(X')$.

\begin{flushleft}
Then $d\left( G^{k_0}\left(X\right), G^{k_0}\left(X'\right)\right)$
\end{flushleft}
\vspace{-0.5cm}
\begin{flushright}
$\begin{array}{l} \geqslant d_C\left( f_{F^{k_0}}\left(\check{C}_1,
\hdots, \check{C}_\mathsf{N}\right), f_{F'^{k_0}}\left(\check{C}_1,
\hdots, \check{C}_\mathsf{N}\right)\right)\\ \geqslant d_C\left(
\left(\check{C}_1, \hdots,\check{C}_{\left|F^{k_0} \right|-1},
f^{sign(F^{k_0})}\left(\check{C}_{\left|F^{k_0}\right|}\right),
\hdots,\right.\right. \\
\left.f^{sign(F^{k_0})}\left(\check{C}_{\mathsf{N}}\right)\right),
\left(\check{C}_1, \hdots,\check{C}_{\left|F'^{k_0}
\right|-1},\right.\\
\left.\left.f^{sign(F'^{k_0})}\left(\check{C}_{\left|F'^{k_0}\right|}\right),
\hdots,
f^{sign(F'^{k_0})}\left(\check{C}_{\mathsf{N}}\right)\right)\right)\\
\geqslant 2^{\mathsf{N}-max\left(\left|F^{k_0}\right|,
\left|F'^{k_0}\right|\right)} \\ \geqslant 1.
\end{array}$
\end{flushright}
\end{itemize}
\end{proof}
So the result is established.

\section{Consequences}
\label{Sec:Consequences}

\subsection{Is Chaotic Behavior Incompatible with Approximately one 
Thousand Folds?}

Results established previously only concern the folding process in the
2D HP square lattice model.  At this point, it is natural to wonder if
such a  model, being  a reasonable approximation  of the  true natural
process,  is chaotic  because  this natural  process  is chaotic  too.
Indeed, claiming  that the natural protein folding  process is chaotic
seems to  be contradictory with  the fact that only  approximately one
thousand    folds   have    been   discovered    this    last   decade
\cite{Andreeva01012004}.   The   number  of  proteins   that  have  an
understood 3D  structure increases  largely year after  year.  However
the number of  new categories of folds seems to be  limited by a fixed
value  approximately  equal to  one  thousand.   Indeed,  there is  no
contradiction as a chaotic behavior  does not forbid a certain form of
order.   As  stated  before,  chaos  only  refers  to  limitations  in
prediction.  For  example, seasons are  not forbidden even  if weather
forecast  has a  non-intense chaotic  behavior.  A  similar regularity
appears in brains: even if hazard  and chaos play an important role on
a  microscopic  scale,  a  statistical  order appears  in  the  neural
network.
 
That is, a  certain order can emerge from a  chaotic behavior, even if
it  is not  a rule  of  thumb. More  precisely, in  our opinion  these
thousand  folds can  be related  to basins  of attractions  or strange
attractors of the dynamical system, objects that are well described by
the  mathematical theory  of chaos.  Thus,  it should  be possible  to
determine all of  the folds that can occur, by  refining our model and
looking  for  its  basins   of  attractions  with  topological  tools.
However, this assumption still remains to be investigated.

\subsection{Is Artificial Intelligence able to Predict Chaotic Dynamic?}

We  will  now  focus on  the  impact  of  using  a chaotic  model  for
prediction.  We  give some results  on two kinds of  experiments, both
using   neural   networks.   Firstly,   we   will   study  whether   a
(mathematical) chaotic behavior can be  learned by a neural network or
not.  Therefore, we design a  global recurrent network that models the
function $F_f$ introduced in the  previous section and we show that it
is more  difficult to  train the network  when $f$ is  chaotic.  These
considerations have been formerly  proposed in \cite{bgs11:ip} and are
extended here.  Secondly, we will try to learn the future conformation
of proteins that consist of  a small number of residues. Our objective
is to  assess if  a neural network  can learn the  future conformation
given the current one and a sequence of a few folds.

In  this  work,  we  choose   to  train  a  classical  neural  network
architecture:  the MultiLayer  Perceptron, a  model of  network widely
used   and  well-known  for   its  universal   approximation  property
\cite{DBLP:journals/nn/HornikSW89}. Let  us notice that  for the first
kind of experiments global feedback connections are added, in order to
have a proper  modeling of chaotical iterations, while  for the latter
kind of experiments the MLPs used are feed-forward ones. In both cases
we  consider  networks  having  sigmoidal hidden  neurons  and  output
neurons with a linear activation  function. They are trained using the
Limited-memory Broyden-Fletcher-Glodfarb-Shanno quasi-Newton algorithm
with  Wolfe  linear  search.   The  training  process  can  either  be
controlled by  the number of  network parameters (weights  and biases)
updates, also called epochs, or by a mean square error criterion.

\subsubsection{Can a Neural Network Learn Chaotic Functions?}

\smallskip

{\it Experimental Protocol}

\smallskip

We consider $f:\mathds{B}^\mathsf{N} \longrightarrow
\mathds{N}^\mathsf{N}$, strategies of singletons ($\forall n \in
\mathds{N}, S^n \in \llbracket 1; \mathsf{N} \rrbracket$), and a MLP
that recognize $F_{f}$. That means, for all $(k,x) \in \llbracket 1 ;
\mathsf{N} \rrbracket \times \mathds{B}^\mathsf{N}$, the response of
the output layer to the input $(k,x)$ is $F_{f}(k,x)$. We thus
connect the output layer to the input one as it is depicted in
Fig.~\ref{perceptron}, leading to a global recurrent artificial neural
network working as follows \cite{bgs11:ip}.

\begin{figure}[b]
\centering
\includegraphics[width=3.25in]{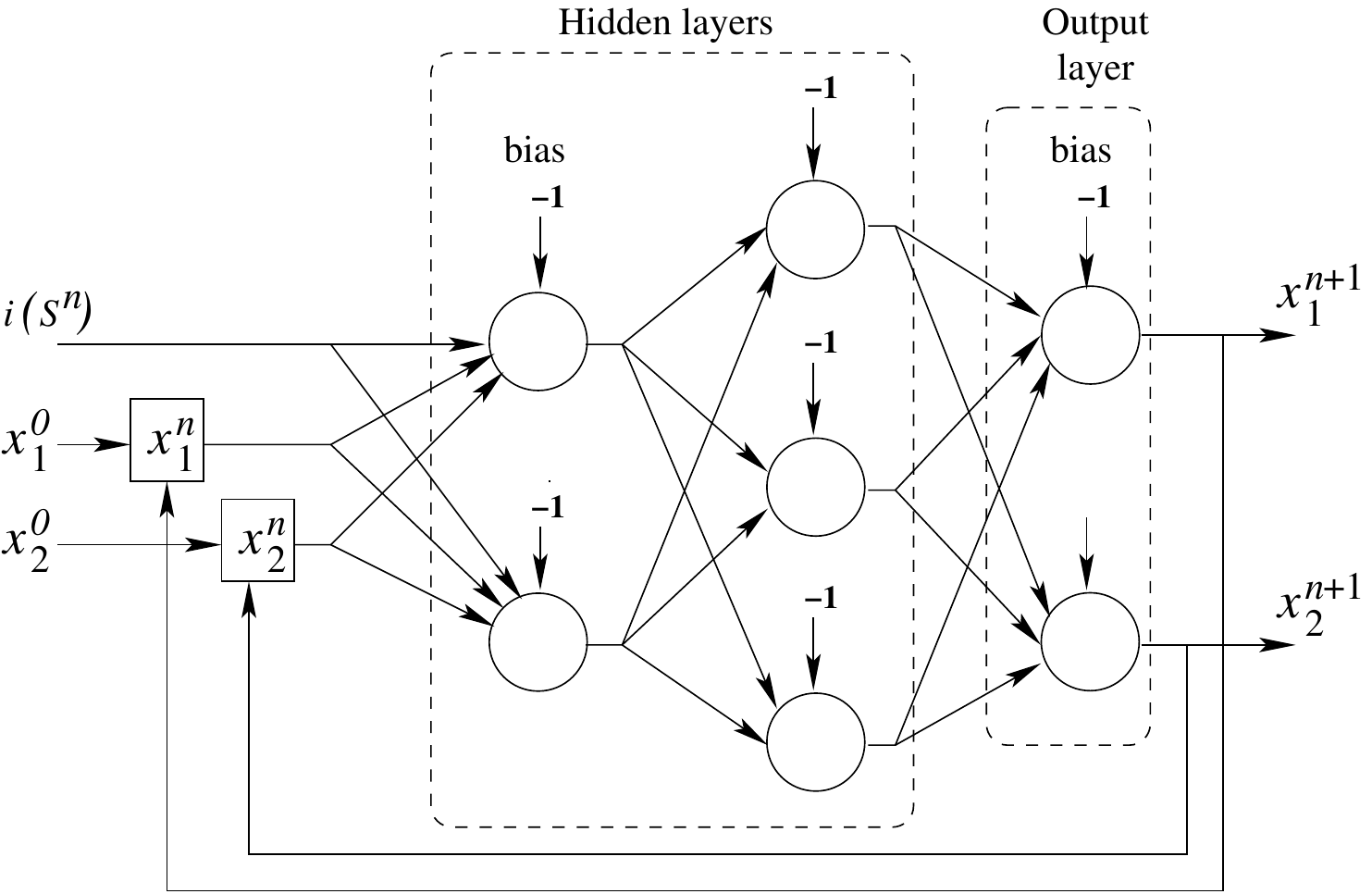}
\caption{Recurrent neural network modeling $F_{f}$}
\label{perceptron}
\hfil
\end{figure}

At  the initialization stage,  the network  receives a  Boolean vector
$x^0\in\mathds{B}^\mathsf{N}$ as input  state, and $S^0 \in \llbracket
1;\mathsf{N}\rrbracket$ in its input integer channel $i()$. Thus, $x^1
= F_{f}(S^0,  x^0)\in\mathds{B}^\mathsf{N}$ is computed  by the neural
network.  This state $x^1$  is published  as an  output. Additionally,
$x^1$ is sent back to the input  layer, to act as Boolean state in the
next iteration. Finally, at iteration number $n$, the recurrent neural
network  receives  the  state $x^n\in\mathds{B}^\mathsf{N}$  from  its
output     layer     and     $i\left(S^n\right)     \in     \llbracket
1;\mathsf{N}\rrbracket$ from  its input integer channel  $i()$. It can
thus      calculate      $x^{n+1}      =      F_{f}(i\left(S^n\right),
x^n)\in\mathds{B}^\mathsf{N}$,  which will  be the  new output  of the
network.  Obviously, this  particular  MLP produces  exactly the  same
values  as  CIs with  update  function $f$.  That  is,  such MLPs  are
equivalent,  when working  with  $i(s)$,  to CIs  with  $f$ as  update
function and strategy $S$ \cite{bgs11:ip}.

Let us now introduce the two following functions:
\begin{itemize}
\item $f_1(x_{1},x_2,x_3)=(\overline{x_{1}},\overline{x_{2}},\overline{x_{3}})$,
\item $f_2(x_{1},x_2,x_3)=(\overline{x_{1}},x_{1},x_{2})$.
\end{itemize}
It can easily  be checked that these functions  satisfy the hypothesis
of Theorem \ref{Th:Caracterisation des  IC chaotiques}, thus their CIs
are  chaotic  according  to  Devaney.   Then,  when  the  MLP  defined
previously  learns to recognize  $F_{f_1}$ or  $F_{f_2}$, it  tries to
learn these  CIs, that  is, a chaotic  behavior as defined  by Devaney
\cite{bgs11:ip}. On the contrary, the function
$$
g(x_{1},x_2,x_3)=(\overline{x_{1}},x_{2},x_{3})
$$ is such  that $\Gamma(g)$ is not strongly  connected. In this case,
due  to Theorem  \ref{Th:Caracterisation des  IC chaotiques},  the MLP
does not learn a chaotic process.  We will now recall the study of the
training  process  of  functions  $F_{f_1}$,  $F_{f_2}$,  and  $F_{g}$
\cite{bgs11:ip}, that is to say, the ability to learn one iteration of
CIs.

\medskip

\noindent {\it Experimental Results}

\smallskip

For each neural network we  have considered MLP architectures with one
and two  hidden layers,  with in the  first case different  numbers of
hidden  neurons. Thus  we will  have  different versions  of a  neural
network modeling the same iteration function \cite{bgs11:ip}. Only the
size  and number of  hidden layers  may change,  since the  numbers of
inputs and  output neurons  are fully specified  by the  function. The
training is performed until the  learning error is lower than a chosen
threshold value ($10^{-2}$).

\begin{table}[!t]
\renewcommand{\arraystretch}{1.3}
\caption{Results of some iteration functions learning, using different
recurrent MLP architectures}
\label{tab1}
\centering
\begin{scriptsize}
\begin{tabular}{|c||c|c|c|c|}
\hline 
 & \multicolumn{4}{c|}{One hidden layer} \\
\cline{2-5}
 & \multicolumn{2}{c|}{8 neurons} & \multicolumn{2}{c|}{10 neurons} \\
\hline
Function & Mean & Success & Mean & Success \\
 & epoch & rate & epoch & rate \\
\hline
$f_1$ & 82.21 & 100\% & 73.44 & 100\% \\
$f_2$ & 76.88 & 100\% & 59.84 & 100\% \\
$g$ & 36.24 & 100\% & 37.04 & 100\% \\
\hline
\hline
 & \multicolumn{4}{c|}{Two hidden layers: 8 and 4 neurons} \\
\cline{2-5}
 & \multicolumn{2}{c|}{Mean epoch number} & \multicolumn{2}{|c|}{Success rate} \\
\hline
$f_1$ & \multicolumn{2}{c|}{203.68} & \multicolumn{2}{c|}{76\%} \\
$f_2$ & \multicolumn{2}{c|}{135.54} & \multicolumn{2}{c|}{96\%} \\
$g$ & \multicolumn{2}{c|}{72.56} & \multicolumn{2}{c|}{100\%} \\
\hline
\end{tabular}
\end{scriptsize}
\end{table}

Table~\ref{tab1}  gives for  each considered  neural network  the mean
number of  epochs needed to  learn one iteration  in their ICs,  and a
success rate  that reflects  a successful training  in less  than 1000
epochs. Both values are  computed considering 25 trainings with random
weights  and biases  initialization. These  results  highlight several
points \cite{bgs11:ip}. First, the two hidden layer structure seems to
be  quite  inadequate to  learn  chaotic  behaviors. Second,  training
networks so  that they  behave chaotically seems  to be  difficult for
these simplistic functions only iterated  one time, since they need in
average more  epochs to be correctly  trained. In the case  of the two
hidden  layer network  topology, a  comparison of  the mean  number of
epochs needed  for a successful learning of  10~chaotic functions with
that  obtained for  10~non  chaotic functions reinforces the  previous
observation.  Indeed,  the  learning  of chaotic  functions  needs  in
average   284.57~epochs,   whereas   non~chaotic   functions   require
232.87~epochs.  In the future, we  also plan to consider larger values
for~$\mathsf{N}$.

\subsubsection{Can a Neural Network Predict a Future Protein Conformation?}

\smallskip

{\it Experimental Protocol}

\smallskip

In this second set of  experiments, multilayer perceptrons are used to
learn the conformation of  very simple proteins (peptides, indeed). In
fact, we consider proteins composed  of five residues, of which only 4
can change since the first one  is always $0$, and folding dynamics of
two or three  folds. For example, if the  current protein conformation
is $(0)1222$,  and folds $4$  and $-1$ are successively  applied, then
the new conformation will be $(0)0112$. Obviously, these choices, that
lead respectively to 20736  and 186624 potential conformations, do not
correspond  to realistic  folding  processes. However,  they allow  to
evaluate  the  ability  of   neural  networks  to  learn  very  simple
conformations.

The  networks  consist  of  MLP   with  3  or  4~inputs,  the  current
conformation  without  the first  residue,  and  a  sequence of  2  or
3~successive  folds.  It  produces  a  single  output:  the  resulting
conformation.  Additionally,  we  slightly  change the  classical  MLP
structure in order to improve  the capacity of such neural networks to
model nonlinear relationships and  to be trained faster. Therefore, we
retain    the   HPU    (Higher-order   Processing    Unit)   structure
\cite{DBLP:journals/ijns/GhoshS92}. This latter artificially increases
the number of inputs by  adding polynomial combinations of the initial
inputs  up to  a given  degree, called  the order  of the  network. To
prevent overfitting  and to  assess the generalization  performance we
use holdout  validation, which means that  the data set  is split into
learning,  validation, and  test subsets.  These subsets  are obtained
through a random sampling strategy.

To  estimate  the  prediction  accuracy  we  use  the  coefficient  of
variation of the root  mean square error (CV\-RMSE), usually presented
as  a  percentage,  the  average  relative  variance  (AVR),  and  the
coefficient of efficiency  denoted by (E). These measures  give a good
estimation of  the capacity of a  neural network to  explain the total
variance of the data. The CVRMSE of the prediction is defined as:
$$
\mbox{CVRMSE}=\frac{100}{\overline{e_k}} \cdot 
 \sqrt{\frac{\sum_{k=1}^N \left(e_k-p_k\right)^2}{N}},
$$ 
where $e_k$ is the expected output for the $k-$th input-output pair,
$p_k$ is the predicted output, $N$ is the number of pairs in the test
set, and $\overline{e_k}$ is the mean value of the expected output.
The average relative variance and coefficient of efficiency are
respectively expressed by:
$$
\mbox{ARV}=\frac{\sum_{k=1}^N \left(e_k-p_k\right)^2}
 {\sum_{k=1}^N \left(e_k-\overline{e_k}\right)^2} \mbox{ and }
 \mbox{E}=1-\mbox{ARV}.
$$ 
These values reflect the accuracy of the prediction as follows: the
nearer CVRMSE and AVR to $0$, and consequently $E$ close to 1.0, the
better the prediction.

\medskip

\noindent {\it Experimental Results}

\smallskip

The considered neural  networks differ in the size  of a single hidden
layer and are trained until a  maximum number of epochs is reached. As
we  set the  order  of the  HPU  structure to~$3$,  instead  of 3  and
4~initial inputs we have 19 and 34~inputs. We train neural networks of
15 and 25~hidden neurons, using as maximum number of epochs a value in
$\{500,1000,2500\}$,  whatever  the  number  of  initial  inputs  (see
Table~\ref{tab2}).  The  learning, validation,  and  test subsets  are
built in such  a way that they respectively  represent 65\%, 10\%, and
25\% of the whole data set.  In the case of 3~initial inputs data sets
of 5184, 10368, and 15552~samples are used, they represent 25\%, 50\%,
and  75\%  of the  20736~potential  conformations.  For the  4~initial
outputs  case   we  restrict  our   experiments  to  a  data   set  of
46656~samples,  that  corresponds  to  25\%  of  the  186624~potential
conformations.

In  Table~\ref{tab2} we give,  for the  different learning  setups and
data  set  sizes,  the mean  values  of  CVMSE,  AVR,  and E  for  the
output.  To compute  these  values, 10~trainings  with random  subsets
construction and network  parameters initialization were performed. It
can be seen that in all cases the better performances are obtained for
the  larger networks  (25~hidden neurons)  that are  the  most trained
(2500~epochs).  Furthermore, a larger  data set  allows only  a slight
increase  of the  prediction  quality.  We observe  for  a three  time
increase of  the data  set: from 5184  to 15552~samples, a  very small
improvement of 1.66\% for the coefficient~$E$.

At a first  glance, the prediction accuracy seems not  too bad for the
3~initial  inputs  topology,  with  coefficients of  efficiency  above
0.9.   However,  remember  that   we  try   to  predict   very  simple
conformations far away from the realistic ones. Furthermore, a look at
the results obtained  for the second topology, the  one with~4~initial
outputs, shows that predicting  a conformation that undergoes only one
more folding  transition is  intractable with the  considered learning
setups: the efficiency coefficient  is always below 0.5.  Clearly, the
different  neural  networks  have  failed at  predicting  the  protein
folding  dynamics. A  larger  neural network  with  a longer  training
process may be  able to improve these results.  But finding a learning
setup well  suited for the prediction of  relevant proteins structures
that are far more complex seems very hypothetical.

\begin{table}[!b]
\renewcommand{\arraystretch}{1.3}
\caption{Results of the validation of networks with an HPU structure of
order~3 for several numbers of hidden neurons}
\label{tab2}
\centering
\begin{scriptsize}
\begin{tabular}{|c||c|c|c|c|}
\hline
Topology & \multicolumn{4}{c|}{3 initial~/~19 HPU inputs and 1 output} \\
\cline{2-5}
Hidden neurons & Epochs & \%~CVRMSE & ARV & E \\
\hline
 & \multicolumn{4}{|c|}{Data set of 5184 samples} \\
\cline{2-5}
\multirow{3}{*}{15 neurons} & 500 & 24.97 & 0.3824 & 0.6176 \\
 & 1000 & 20.67 & 0.2628 & 0.7372 \\
 & 2500 & 16.69 & 0.1731 & 0.8269 \\
\cline{2-5}
\multirow{3}{*}{25 neurons} & 500 & 23.33 & 0.3373 & 0.6627 \\
 & 1000 & 15.94 & 0.1565 & 0.8435 \\
 & 2500 & 10.75 & 0.0715 & 0.9285 \\
\hline
 & \multicolumn{4}{|c|}{Data set of 10368 samples} \\
\cline{2-5}
\multirow{3}{*}{15 neurons} & 500 & 26.27 & 0.4223 & 0.5777 \\
 & 1000 & 22.08 & 0.3000 & 0.7000 \\
 & 2500 & 18.81 & 0.2225 & 0.7775 \\
\cline{2-5}
\multirow{3}{*}{25 neurons} & 500 & 24.54 & 0.3685 & 0.6315 \\
 & 1000 & 16.11 & 0.1591 & 0.8409 \\
 & 2500 & 9.43 & 0.0560 & 0.9440 \\
\hline
 & \multicolumn{4}{|c|}{Data set of 15552 samples} \\
\cline{2-5}
\multirow{3}{*}{15 neurons} & 500 & 24.74 & 0.3751 & 0.6249 \\
 & 1000 & 19.92 & 0.2444 & 0.7556 \\
 & 2500 & 16.35 & 0.1659 & 0.8341 \\
\cline{2-5}
\multirow{3}{*}{25 neurons} & 500 & 22.90 & 0.3247 & 0.6753 \\
 & 1000 & 15.42 & 0.1467 & 0.8533 \\
 & 2500 & 8.89 & 0.0501 & 0.9499 \\
\hline
\hline
Topology & \multicolumn{4}{c|}{4 initial~/~34 HPU inputs and 1 output} \\
\hline
 & \multicolumn{4}{|c|}{Data set of 46656 samples} \\
\cline{2-5}
\multirow{3}{*}{15 neurons} & 500 & 35.27 & 0.7606 & 0.2394 \\
 & 1000 & 33.50 & 0.6864 & 0.3136 \\
 & 2500 & 31.94 & 0.6259 & 0.3741 \\
\cline{2-5}
\multirow{3}{*}{25 neurons} & 500 & 35.05 & 0.7535 & 0.2465 \\
 & 1000 & 32.25 & 0.6385 & 0.3615 \\
 & 2500 & 28.61 & 0.5044 & 0.4956 \\
\hline
\end{tabular}
\end{scriptsize}
\end{table}

Finally, let  us notice that the  HPU structure has a  major impact on
the  learning quality.  Indeed,  let us  consider  the coefficient  of
efficiency obtained  for the  data set of  5184~samples and  a network
composed of 25~hidden neurons trained  during 2500 epochs. As shown in
Table~2 the coefficient $E$ is  about 0.9285 if the neural network has
an  HPU  structure  of   order~3,  whereas  experiments  made  without
increasing the number of initial  inputs give 0.6930 as mean value for
$E$.  Similar experiments  in case  of the  second topology  result in
$E=0.2154$  for the  classical structure  that has  no  HPU structure.
That represents a  respective decrease of more than  25 and 50\%, from
which we  can say that MLP  networks with a  classical structure would
have given worse predictions.

At this point we can only claim that it is not completely evident that
computational  intelligence tools  like  neural networks  are able  to
predict,  with a  good accuracy,  protein folding.  To  reinforce this
belief, tools optimized to chaotic  behaviors must be found -- if such
tools exist.  Similarly, there should  be a link between  the training
difficulty and  the ``quality'' of  the disorder induced by  a chaotic
iteration  function  (their  constants  of  sensitivity,  expansivity,
etc.), and this second relation must be found.

\section{Conclusion}
\label{Conclusion}

In this  paper the  topological dynamics of  protein folding  has been
evaluated.  More  precisely,  we  have studied  whether  this  folding
process  is predictable  in the  2D model  or not.  It is  achieved to
determine if it is reasonable to think that computational intelligence
tools like  neural networks  are able  to predict the  3D shape  of an
amino  acids sequence.   It  is mathematically  proven,  by using  two
different  ways, that  protein folding  in  2D hydrophobic-hydrophilic
(HP) square lattice model is chaotic according to Devaney.

Consequences  for  both  structure  prediction and  biology  are  then
outlined.  In particular,  the  first comparison  of  the learning  by
neural networks of  a chaotic behavior on the one hand,  and of a more
natural dynamics on the other  hand, are outlined. The results tend to
show  that such  chaotic behaviors  are more  difficult to  learn than
non-chaotic  ones.  It is  not  our pretension  to  claim  that it  is
impossible to  predict chaotic behaviors such as  protein folding with
computational intelligence.  Our opinion  is just that  this important
point must now be regarded with attention.

In future work  the dynamical behavior of the  protein folding process
will be more deeply studied, by using topological tools as topological
mixing, Knudsen  and Li-Yorke  notions of chaos,  topological entropy,
etc. The quality  and intensity of this chaotic  behavior will then be
evaluated. Consequences both on folding prediction and on biology will
then be regarded in detail. This study may also allow us to determine,
at  least to  a certain  extent, what  kind of  errors on  the initial
condition lead to acceptable results, depending on the intended number
of  iterations (i.e.,  the number  of  folds). Such  a dependence  may
permit to define strategies depending on  the type and the size of the
proteins, their proportion of hydrophobic residues, and so on.

Other  molecular or  genetic  dynamics will  be  investigate by  using
mathematical topology, and other  chaotic behaviors will be looked for
(as  neurons in the  brain).  More  specifically, various  tools taken
from  the  field of  computational  intelligence  will  be studied  to
determine if some of these tools are capable to predict behaviors that
are  chaotic  with  a  good  accuracy.  It  is  highly  possible  that
prediction  depends  both  on  the  tool and  on  the  chaos  quality.
Moreover, the study  presented in this paper will  be extended to high
resolution 3D models.  Impacts of  the chaotic behavior of the protein
folding  process in  biology  will be  regarded.   Finally, the  links
between  this established  chaotic behavior  and stochastic  models in
gene expression, mutation, or in Evolution, will be investigated.


\bibliographystyle{spphys}    

\begin{thebibliography}{}
\providecommand{\url}[1]{{#1}}
\providecommand{\urlprefix}{URL }
\expandafter\ifx\csname urlstyle\endcsname\relax
  \providecommand{\doi}[1]{DOI \discretionary{}{}{}#1}\else
  \providecommand{\doi}{DOI \discretionary{}{}{}\begingroup
  \urlstyle{rm}\Url}\fi

\end{thebibliography}


\begin{thebibliography}{10}
\providecommand{\url}[1]{{#1}}
\providecommand{\urlprefix}{URL }
\expandafter\ifx\csname urlstyle\endcsname\relax
  \providecommand{\doi}[1]{DOI \discretionary{}{}{}#1}\else
  \providecommand{\doi}{DOI \discretionary{}{}{}\begingroup
  \urlstyle{rm}\Url}\fi
\bibitem{Hoque09}
M.~Hoque, M.~Chetty, A.~Sattar, in \emph{Biomedical Data and Applications},
  \emph{Studies in Computational Intelligence}, vol. 224, ed. by A.~Sidhu,
  T.~Dillon (Springer Berlin Heidelberg, 2009), pp. 317--342

\bibitem{Crescenzi98}
P.~Crescenzi, D.~Goldman, C.~Papadimitriou, A.~Piccolboni, M.~Yannakakis, in
  \emph{Proceedings of the thirtieth annual ACM symposium on Theory of
  computing} (ACM, New York, NY, USA, 1998), STOC '98, pp. 597--603

\bibitem{DBLP:conf/cec/HiggsSHS10}
T.~Higgs, B.~Stantic, T.~Hoque, A.~Sattar, in \emph{IEEE Congress on
  Evolutionary Computation} \cite{DBLP:conf/cec/2010}, pp. 1--8

\bibitem{Shmygelska2005Feb}
A.~Shmygelska, H.H. Hoos.
\newblock An ant colony optimisation algorithm for the 2d and 3d hydrophobic
  polar protein folding problem (2005 Feb)

\bibitem{DBLP:conf/cec/Perez-HernandezRG10}
L.G. P{\'e}rez-Hern{\'a}ndez, K.~Rodr\'{\i}guez-V{\'a}zquez,
  R.~Gardu{\~n}o-Ju{\'a}rez, in \emph{IEEE Congress on Evolutionary
  Computation} \cite{DBLP:conf/cec/2010}, pp. 1--8

\bibitem{Islam:2009:NMA:1695134.1695181}
M.K. Islam, M.~Chetty, in \emph{Proceedings of the 22nd Australasian Joint
  Conference on Advances in Artificial Intelligence} (Springer-Verlag, Berlin,
  Heidelberg, 2009), AI '09, pp. 412--421

\bibitem{Dubchak1995}
I.~Dubchak, I.~Muchnik, S.R. Holbrook, S.H. Kim, Proc Natl Acad Sci U S A
  \textbf{92}(19), 8700 (1995)

\bibitem{Anfinsen20071973}
C.B. Anfinsen, Science \textbf{181}(4096), 223 (1973).
\newblock \doi{10.1126/science.181.4096.223}.
\newblock \urlprefix\url{http://www.sciencemag.org/content/181/4096/223.short}

\bibitem{Bonneau01}
R.~Bonneau, D.~Baker, Annual Review of Biophysics and Biomolecular Structure
  \textbf{30}(1), 173 (2001).
\newblock \doi{10.1146/annurev.biophys.30.1.173}

\bibitem{Chivian2005}
D.~Chivian, D.E. Kim, L.~Malmström, J.~Schonbrun, C.A. Rohl, D.~Baker,
  Proteins \textbf{61}(S7), 157 (2005).
\newblock \urlprefix\url{http://dx.doi.org/10.1002/prot.20733}

\bibitem{Zhang2005}
Y.~Zhang, A.K. Arakaki, J.~Skolnick, Proteins \textbf{61}(S7), 91 (2005).
\newblock \urlprefix\url{http://dx.doi.org/10.1002/prot.20724}

\bibitem{bgc11:ip}
J.~Bahi, C.~Guyeux, N.~Cote, in \emph{IJCNN 2011, Int. Joint Conf. on Neural
  Networks} (San Jose, California, United States, 2011), pp. ***--***.
\newblock To appear

\bibitem{Bohm1991375}
G.~B{\"o}hm, Chaos, Solitons \& Fractals \textbf{1}(4), 375  (1991).
\newblock \doi{DOI: 10.1016/0960-0779(91)90028-8}.
\newblock
  \urlprefix\url{http://www.sciencedirect.com/science/article/B6TJ4-46CBXVT-1X%
/2/370489c218e4c2732cd9b620ef50c696}

\bibitem{Zhou96}
H.b. Zhou, L.~Wang, The Journal of Physical Chemistry \textbf{100}(20), 8101
  (1996).
\newblock \doi{10.1021/jp953409x}

\bibitem{Braxenthaler97}
M.~Braxenthaler, R.R. Unger, D.~Auerbach, J.~Moult, Proteins-structure Function
  and Bioinformatics \textbf{29}, 417 (1997).
\newblock \doi{10.1002/(SICI)1097-0134(199712)29:4<417::AID-PROT2>3.3.CO;2-O}

\bibitem{Berger98}
B.~Berger, T.~Leighton, in \emph{Proceedings of the second annual international
  conference on Computational molecular biology} (ACM, New York, NY, USA,
  1998), RECOMB '98, pp. 30--39

\bibitem{Dill1985}
K.~Dill, Biochemistry \textbf{24}(6), 1501 (1985).
\newblock \urlprefix\url{http://ukpmc.ac.uk/abstract/MED/3986190}

\bibitem{DBLP:conf/cec/IslamC10}
M.K. Islam, M.~Chetty, in \emph{IEEE Congress on Evolutionary Computation}
  \cite{DBLP:conf/cec/2010}, pp. 1--8

\bibitem{Unger93}
R.~Unger, J.~Moult, in \emph{Proceedings of the 5th International Conference on
  Genetic Algorithms} (Morgan Kaufmann Publishers Inc., San Francisco, CA, USA,
  1993), pp. 581--588

\bibitem{DBLP:conf/cec/HorvathC10}
D.~Horvath, C.~Chira, in \emph{IEEE Congress on Evolutionary Computation}
  \cite{DBLP:conf/cec/2010}, pp. 1--8

\bibitem{Dyson2005}
H.J. Dyson, P.E. Wright, Nature Reviews Molecular Cell Biology \textbf{6}(3),
  197 (2005).
\newblock \doi{DOI: 10.1038/nrm1589}

\bibitem{doi:10.1146/annurev.biophys.37.032807.125924}
V.N. Uversky, C.J. Oldfield, A.K. Dunker, Annual Review of Biophysics
  \textbf{37}(1), 215 (2008).
\newblock \doi{10.1146/annurev.biophys.37.032807.125924}.
\newblock PMID: 18573080

\bibitem{guyeux09}
J.M. Bahi, C.~Guyeux, Journal of Algorithms \& Computational Technology
  \textbf{4}(2), 167 (2010)

\bibitem{Shmygelska05}
A.~Shmygelska, H.~Hoos, BMC Bioinformatics \textbf{6}(1), 30 (2005).
\newblock \doi{10.1186/1471-2105-6-30}

\bibitem{Backofen99algorithmicapproach}
R.~Backofen, S.~Will, P.~Clote.
\newblock Algorithmic approach to quantifying the hydrophobic force
  contribution in protein folding (1999)

\bibitem{Devaney}
R.L. Devaney, \emph{An Introduction to Chaotic Dynamical Systems}, 2nd edn.
  (Addison-Wesley, Redwood City, CA, 1989)

\bibitem{Banks92}
J.~Banks, J.~Brooks, G.~Cairns, P.~Stacey, Amer. Math. Monthly \textbf{99}, 332
  (1992)

\bibitem{guyeuxTaiwan10}
J.M. Bahi, C.~Guyeux, Q.~Wang, in \emph{ICCASM 2010, Int. Conf. on Computer
  Application and System Modeling} (Taiyuan, China, 2010), pp.
  V13--643--V13--647.
\newblock \doi{10.1109/ICCASM.2010.5622199}.
\newblock \urlprefix\url{http://dx.doi.org/10.1109/ICCASM.2010.5622199}

\bibitem{guyeux10}
J.M. Bahi, C.~Guyeux, in \emph{WCCI'10, IEEE World Congress on Computational
  Intelligence} (Barcelona, Spain, 2010), pp. 1--7.
\newblock Best paper award

\bibitem{bcgr11:ip}
J.~Bahi, J.f. Couchot, C.~Guyeux, A.~Richard, in \emph{FCT'11, 18th Int. Symp.
  on Fundamentals of Computation Theory}, \emph{LNCS}, vol. 6914 (Oslo, Norway,
  2011), \emph{LNCS}, vol. 6914, pp. 126--137

\bibitem{Andreeva01012004}
A.~Andreeva, D.~Howorth, S.E. Brenner, T.J.P. Hubbard, C.~Chothia, A.G. Murzin,
  Nucleic Acids Research \textbf{32}(suppl 1), D226 (2004).
\newblock \doi{10.1093/nar/gkh039}

\bibitem{bgs11:ip}
J.~Bahi, C.~Guyeux, M.~Salomon, in \emph{ICCANS 2011, IEEE Int. Conf. on
  Computer Applications and Network Security} (Maldives, Maldives, 2011)

\bibitem{DBLP:journals/nn/HornikSW89}
K.~Hornik, M.B. Stinchcombe, H.~White, Neural Networks \textbf{2}(5), 359
  (1989)

\bibitem{DBLP:journals/ijns/GhoshS92}
J.~Ghosh, Y.~Shin, Int. J. Neural Syst. \textbf{3}(4), 323 (1992)

\bibitem{DBLP:conf/cec/2010}
\emph{Proceedings of the IEEE Congress on Evolutionary Computation, CEC 2010,
  Barcelona, Spain, 18-23 July 2010} (IEEE, 2010)

\end{thebibliography}

%
%

\end{document}